\newcommand{\tabincell}[2]{\begin{tabular}{@{}#1@{}}#2\end{tabular}}
\newtheorem{claim}{Claim}[section]
\newtheorem{theorem}{Theorem}[section]
\newtheorem{lemma}[theorem]{Lemma}
\newtheorem{example}[theorem]{Example}
\newenvironment{proof}{{\bf Proof:  }}{\hfill\rule{2mm}{2mm}\vspace*{5pt}}
\newenvironment{proofof}[1]{{\vspace*{5pt} \noindent\em Proof of #1:  }}{\hfill\rule{2mm}{2mm}\vspace*{5pt}}
\newcommand{\floor}[1]{\lfloor #1 \rfloor }
\newcommand{\MMS}{\mathsf{MMS}}
\newcommand{\RD}{\mathsf{RandomDecline}}
\newcommand{\CP}{\mathsf{ConsecutivePick}}
\newcommand{\SRR}{\mathsf{SesquiRR}}
\title{Approximate and Strategyproof Maximin Share Allocation of Chores with Ordinal Preferences\thanks{Part of the results of this article appeared in IJCAI 2019~\citep{ijcai/AzizLW19}. This extended version contains several new results and some analyses that are more refined.} }
\author{
	Haris Aziz$^1$ \qquad
	Bo Li$^2$ \qquad
	Xiaowei Wu$^3$\\
	$^1$ UNSW Sydney and Data61 CSIRO, Australia\\
	$^2$ Department of Computing, The Hong Kong Polytechnic University, Hong Kong, China\\
	$^3$ IOTSC, University of Macau, Macau, China\\
	\texttt{\small haziz@cse.unsw.edu.au, comp-bo.li@polyu.edu.hk, xiaoweiwu@um.edu.mo}
}
\begin{document}

\maketitle

\begin{abstract}
We initiate the work on maximin share (MMS) fair allocation of $m$ indivisible chores to $n$ agents using only their ordinal preferences, from both algorithmic and mechanism design perspectives. 
The previous best-known approximation is $2-1/n$ by Aziz et al. [IJCAI 2017].
We improve this result by giving a simple deterministic $5/3$-approximation algorithm that determines an allocation sequence of agents,  according to which items are allocated one by one.
By a tighter analysis, we show that for $n=2,3$, our algorithm achieves better approximation ratios, and is actually optimal.
We also consider the setting with strategic agents, where agents may misreport their preferences to manipulate the outcome. 
We first provide a $O(\log (m/n))$-approximation consecutive picking algorithm, and then improve the approximation ratio to $O(\sqrt{\log n})$ by a randomized algorithm. 
Our results uncover some interesting contrasts between the approximation ratios achieved for chores versus goods.
\end{abstract}

\section{Introduction} \label{intro}
Multi-agent resource allocation and fair division are major themes in computer science~\citep{choice/BouveretCM16} and mathematical economics~\citep{BrTa96a,BaYo01a}. In this work, we consider allocation algorithms to fairly assign $m$ heterogeneous and indivisible chores to $n$ agents with additive utilities. We take both algorithmic and mechanism design perspectives. 
Firstly, we explore how well we can achieve fairness guarantees when only considering ordinal preferences.  There is a growing body of work on this issue~\citep{DBLP:conf/wine/AnshelevichS16,DBLP:journals/sigecom/Anshelevich16,DBLP:conf/aaai/AnshelevichS16} where it being explored how well ordinal information can help approximate objectives based on cardinal valuations.
Secondly, we take a mechanism design perspective to the problem of fair allocation. We impose the requirement that the algorithm should be strategyproof, i.e., no agent should have an incentive of reporting untruthfully. Under this requirement, we study how well the fairness can be approximated. 
This approach falls under the umbrella of approximation mechanism design without money that has been popularized by \citep{teco/ProcacciaT13}.

The fairness concept we use in this paper is the intensively studied and well-established maximin share fairness. The {\em maximin fair share} (MMS) of an agent is the best she can guarantee if she is allowed to partition items into $n$ bundles but then receives the least preferred one, which was proposed by \citep{bqgt/Budish10} as a fairness concept for allocation of indivisible items. The concept coincides with the standard proportionality fairness concept if the items are divisible. 
It has been proved by \citep{sigecom/ProcacciaW14} and \citep{jacm/KurokawaPW18} that there may not exist an allocation such that very agent's utility is no worse than her MMS. 
As a result, significant effort has been focused on algorithms that find approximate MMS allocations~\citep{icalp/AmanatidisMNS15,jacm/KurokawaPW18}. 
In recent years, \citep{sigecom/GargT20} and \citep{corr/HuangL19} obtained algorithms to find a state of the art ($4/3-\Theta(1/n))$- and 11/9-approximate MMS fair allocations for goods and chores respectively. 
For a more detailed literature view, please refer to Section~\ref{sec:review}.

On one hand, one agent's MMS is defined with respect to her cardinal preference, which places an exact numerical value on each item, and all the aforementioned works assume that the algorithm has full information of these cardinal values.
Since cardinal values can sometimes be difficult to obtain,
this has led researchers to study {\em ordinal algorithms} which only ask agents to rank the goods in the order of their preferences, i.e. the ordinal preferences \citep{ecai/BouveretEL10,ai/AzizGMW15}.
A decision maker wants to know what the price of the missing information is by knowing only ordinal preferences.
\citep{ijcai/AmanatidisBM16} proved that with only ordinal information about the valuations, no algorithm can guarantee better than $\Omega(\log n)$-approximation (for goods). Very recently, \citep{halperndistortion} showed that there is an ordinal  algorithm that guarantees $O(\log n)$-approximate MMS fairness for all agents. 
These works only focused on the case of goods, but there are many settings in which agents may have negative utilities such as when chores or tasks are to be allocated.
In this work, we study to what extent MMS fairness can be guaranteed via ordinal preferences when the items are chores.

In the works discussed above, the focus has been on examining the existence or approximation of MMS allocations. In other words, the problem has been considered from an algorithmic point of view but incentive compatibility has not been addressed. Strategic agents may have incentives to misreport their preferences to manipulate the final allocation of the algorithm in order to increase their utilities. 
Accordingly, a natural question is if it is possible to elicit truthful preferences and also guarantee approximate MMS fairness? Strategyproofness can be a demanding constraint especially when monetary transfers are not allowed.
\citep{ijcai/AmanatidisBM16} were the first to embark on a study of strategyproof and approximately MMS fair algorithms.
They gave a deterministic strategyproof ordinal algorithm which is $O(m-n)$-approximate when the items are goods.
In this paper, we revisit strategyproof MMS allocation by considering the case of chores.
All in all, in this work, we want to answer the following research questions.
\begin{quote}
    {\em When allocating indivisible chores, what approximation guarantee of maximin share fairness can be achieved using ordinal preferences? 
    Furthermore, how can we elicit agents' true preferences and still approximate maximin share fairness?}
\end{quote}

\subsection{Our results}

\paragraph{Algorithmic Perspective.} 
We first take an algorithmic perspective on fair allocation of indivisible chores to agents using ordinal preferences. With cardinal preferences, the best known result is the 11/9-approximate MMS algorithm by \citep{corr/HuangL19}. 
We note that the round-robin algorithm that uses only agents' ordinal preferences returns $2 - 1/n$ approximate MMS allocations \citep{aaai/AzizRSW17}.
In this work, we first improve this result by designing a simple periodic sequential allocation algorithm that ensures $5/3$ approximation for all $n$. 
Interestingly, by refining our analyses and constructing hard instances for $n=2,3$,  we show that our algorithm is actually optimal for these cases.
\begin{table}[htbp]
	\begin{center}
		\begin{tabular}{ |c||cc|cc|} 
			\hline
			& \multicolumn{2}{c|}{Goods}  & \multicolumn{2}{c|}{Chores}  \\ \hline
			& Lower & Upper  & Lower  & Upper \\ \hline \hline
			Ordinal & {\tabincell{c}{$H_n$ \\  \citeauthor{ijcai/AmanatidisBM16}\\{[2016]}}}   & {\tabincell{c}{$2H_n$ \\ \citeauthor{halperndistortion}\\{[2020]}}}  & {\tabincell{c}{$4/3$ for $n = 2$ \\ $7/5$ for $n = 3$ \\  \text{[Our work]}}} & {\tabincell{c}{$4/3$ for $n = 2$ \\ $7/5$ for $n = 3$ \\ $5/3$ for $n \ge 4$ \\  \text{[Our work] }}} \\\hline
			Cardinal & Unknown & {\tabincell{c}{$4/3 - \Theta(1/n)$ \\ \citeauthor{sigecom/GargT20}\\{[2020]}}}  & Unknown & {\tabincell{c}{$11/9$ \\ \citeauthor{corr/HuangL19}\\{[2019]}}}  \\ \hline
		\end{tabular}
	\end{center}
	\caption{
		Lower and upper bounds on approximation of MMS fairness for allocating goods or chores using cardinal or ordinal preferences. Here $H_n = \Theta(\log n)$ is the $n$-th harmonic number and $n$ is the number of agents.}
	\label{table:algorithm}
\end{table}

Our results depend on the following two ideas.
Firstly, we reduce any chore allocation instance to a special one where all agents have the same ordinal preference for items, which is essentially the hardest situation for maximin share fair allocation. The technique has been used previously  \citep{journals/aamas/BouveretL16,journals/teco/BarmanK20,corr/HuangL19}.
Secondly, our algorithm falls under the umbrella of  sequential allocating algorithms in which items are ordered in decreasing order of their costs and assigned to agents sequentially following the order.
In particular, we consider allocation sequences that have a \emph{pattern} and the sequence is obtained by repeating the pattern.
We design a pattern with a length of roughly $1.5n$, and name our algorithm as the {\em Sesqui-Round Robin} Algorithm.
While we prove that our algorithm is optimal for $n \le 3$, we note that it is not optimal for larger $n=4$ (for a detailed discussion, please refer to Section~\ref{sec:conclusion}).
We leave exploring the optimal algorithm for arbitrary $n$ as further study.

\paragraph{Mechanism Design Perspective.} 
We also take a mechanism design perspective for our problem when the agents may misreport their preferences to decrease costs.  
We design a deterministic sequential picking algorithm,  $\CP$, where each agent consecutively selects a number of items, and show that it is strategyproof.
Roughly speaking, given an order of the agents, $\CP$ lets each agent $i$ pick $a_{i}$ items and leave, where $\sum_{i} a_i = m$.
\citep{ijcai/AmanatidisBM16} proved that when the items are goods, the best $\CP$ algorithm can guarantee an approximation of $\floor{(m-n+2)/2}$,
and such an approximation can be easily achieved by letting each of the first $n-1$ agents select one item and allocating all the remaining items to the last agent.
Compared to their result, we show that by carefully deciding the $a_{i}$'s, when items are chores,
we are able to significantly improve the bound to $O(\log (m/n))$\footnote{In this paper we use $\log(\cdot)$ to denote $\log_2(\cdot)$.}.
Moreover, we show that this approximation ratio is the best a $\CP$ algorithm can achieve. 
We further improve the approximation ratio by randomized algorithms.
Particularly, we show that by randomly allocating each item but allowing each agent to
reject a small set of ``bad'' items (i.e., with the largest cost) once, the resulting algorithm is strategyproof and achieves an approximation ratio of $O(\sqrt{\log n})$ in expectation.

\paragraph{Organization.}
We formally define our model and introduce necessary notations in Section~\ref{sec:preli}.
The algorithmic results for approximating MMS fairness using ordinal preferences are given in Section~\ref{sec:algorithmic}.
We present strategyproof algorithms in Section~\ref{sec:sp} and a detailed literature review in Section~\ref{sec:review}.
Finally, Section~\ref{sec:conclusion} concludes the paper with some discussions on the future works.

\section{Model and Preliminaries}\label{sec:preli}

In a fair allocation problem, $N$ is a set of $n$ agents, and $M$ is a set of $m$ indivisible items.
The goal is to fairly distribute all the items to these agents.
Different agents may have different preferences for these items and these preferences are generally captured by
utility or {\em valuation} functions: each agent $i$ is associated with a function $v_i:2^{M}\to \mathbb{R}$ that valuates any set of items.

\paragraph{MMS fairness.}

Imagine that agent $i$ gets the opportunity to partition all items into $n$ bundles, but she is the last to choose a bundle.
Then her best strategy is to partition the items such that the smallest value of a bundle is maximized.
Let $\Pi(M)$ denote the set of all $n$-partitionings of $M$.
Then the {\em maximin share (MMS)} of agent $i$ is defined as
\begin{equation}\label{eq:mms:negative}
\MMS_i = \max_{( X_1, \ldots, X_n) \in \Pi(M)} \min_{j \in N}  v_i(X_j).
\end{equation}
If agent $i$ receives a bundle of items with value at least $\MMS_i$, this allocation is called MMS fair to her.

\medskip

In this work, it is assumed that items are chores: $v_i(S)\leq 0$ for all $i\in N$ and $S \subseteq M$.
Then each agent actually wants to receive as few items as possible.
For ease of description, we ascribe a disutility or \emph{cost} function $c_i=-v_i$ for each agent $i$.
We further assume that the cost function of each agent $i$ is additive.
Accordingly, the cost function $c_{i}$ can be represented by a cost vector $(c_{i1},\ldots,c_{im})$ where $c_{ij}=c_{i}(\{j\})$ is the cost of agent $i$ for item $j$.
Then for any $S\subseteq M$ we have $c_{i}(S)=\sum_{j\in S}c_{ij}$.
We refer $c = (c_1, \ldots, c_n)$ as the {\em cardinal} preference profile. 
Agent $i$'s maximin share can be equivalently defined as
\begin{equation}\label{eq:mms:positive}
\MMS_i = \min_{ (X_1, \ldots, X_n) \in \Pi(M)} \max_{j \in N}  c_i(X_j).
\end{equation}

Note that the maximin threshold defined in Equation \ref{eq:mms:positive} is positive which is
the opposite number of the threshold defined in Equation \ref{eq:mms:negative}.
Throughout the rest of our paper, we choose to use the second definition.
For each agent $i$, we use a permutation over $M$, $\sigma_{i}:[m] \to M$, to denote agent $i$'s {\em ranking} on the items: $c_{i\sigma_{i}(1)}\geq\ldots \geq c_{i\sigma_{i}(m)}$.
In other words, item $\sigma_i(1)$ is the least preferred item and $\sigma_i(m)$ is the most preferred.
We refer to $\sigma = (\sigma_1, \ldots, \sigma_n)$ as the {\em ordinal} preference profile. 
Let $x=(x_{i})_{i\in N}$ be an {\em allocation}, where $x_{i}=(x_{ij})_{j\in M}$ and $x_{ij}\in \{0,1\}$ indicates if agent $i$ gets item $j$ under allocation $x$.
A feasible allocation guarantees a partition of $M$, i.e., $\sum_{i\in N}x_{ij}=1$ for any $j\in M$.
We somewhat abuse the definition and let $X=(X_i)_{i\in N}$, $X_{i}=\{j\in M : x_{ij}=1\}$ and $c_{i}(x)=c_{i}(x_{i})=c_{i}(X_{i})$.
We call an allocation an {\em $\alpha$-MMS allocation} if $c_{i}(x_{i}) \leq \alpha\cdot\MMS_{i}$ for all agents $i$.
When $\alpha = 1$, the allocation is called an {\em MMS allocation}.

We first state the following simple observation about MMS.
Lemma~\ref{lem:mms:bound} implies if an agent receives $k$ items, then her cost is at most $k\cdot \MMS_i$.

\begin{lemma} \label{lem:mms:bound}
	For any agent $i$ and any cost function $c_{i}$, we have
	\begin{itemize}
		\item $\MMS_{i}\geq \frac{1}{n}\cdot c_{i}(M)$;
		\item $\MMS_{i}\geq c_{ij}$ for any $j\in M$.
	\end{itemize}
\end{lemma}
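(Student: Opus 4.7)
The plan is to verify both inequalities directly from the definition $\MMS_i = \min_{(X_1,\ldots,X_n)\in\Pi(M)} \max_{j\in N} c_i(X_j)$, using that costs of chores are nonnegative and additive.

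For the first bound, I would use a simple averaging argument. Fix any partition $(X_1,\ldots,X_n) \in \Pi(M)$. Since the $X_j$ form a partition of $M$ and $c_i$ is additive, $\sum_{j\in N} c_i(X_j) = c_i(M)$, so the maximum bundle cost is at least the average: $\max_{j\in N} c_i(X_j) \geq \tfrac{1}{n} c_i(M)$. Taking the minimum over all partitions preserves this lower bound, yielding $\MMS_i \geq \tfrac{1}{n} c_i(M)$.

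For the second bound, I would again fix an arbitrary partition $(X_1,\ldots,X_n)$ and the item $j \in M$. Let $k$ be the (unique) index with $j \in X_k$. Because every item is a chore we have $c_{ij'} \geq 0$ for all $j' \in M$, so by additivity $c_i(X_k) = c_{ij} + \sum_{j' \in X_k \setminus \{j\}} c_{ij'} \geq c_{ij}$. Hence $\max_{\ell \in N} c_i(X_\ell) \geq c_i(X_k) \geq c_{ij}$. Since this holds for every partition, taking the minimum gives $\MMS_i \geq c_{ij}$.

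Neither step poses any real obstacle; the proof is essentially two one-line calculations. The only point worth stating explicitly is that all $c_{ij}$ are nonnegative (chores), which is what allows us to drop the other items in the bundle containing $j$ without increasing the bundle's cost. I would present the two items as short paragraphs inside a single \texttt{proof} environment, each ending with a sentence noting that the inequality holds for an arbitrary partition and hence for the minimizer.
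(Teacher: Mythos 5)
Your proof is correct and follows essentially the same route as the paper: an averaging argument over the bundles of an arbitrary partition for the first inequality, and for the second, the observation that the bundle containing item $j$ has cost at least $c_{ij}$ by nonnegativity of chore costs. The only cosmetic difference is that the paper first reduces the second claim to the heaviest item $\sigma_i(1)$, whereas you argue directly for an arbitrary $j$; both are equally valid.
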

\begin{proof}
	The first inequality is clear as for any partition of the items, the largest bundle has cost at least the average of total cost, i.e., $\frac{1}{n}\cdot c_{i}(M)$.
	For the second inequality, it suffices to show $\MMS_{i}\geq c_{i\sigma_{i}(1)}$.
	This is also clear since in any partition of the items, 
	$\sigma_{i}(1)$ belongs to some bundle and thus the largest bundle should have cost at least $c_{i \sigma_{i}(1)}$.
\end{proof}

By Lemma \ref{lem:mms:bound}, it is easy to see that if $m\leq n$, any allocation that allocates at most one item to each agent is an MMS allocation. Thus throughout this paper, we assume $m > n$.

\paragraph{Ordinal Algorithm.}
An \emph{ordinal algorithm} ${\cal A}$ takes the ordinal preferences $\sigma$ of agents (instead of cardinal preferences $c$) as input, and computes an allocation ${\cal A}(\sigma)$.
Note that the agents do have cardinal cost functions, according to which $\MMS_i$'s are defined.
We call an ordinal algorithm $\alpha$-approximate MMS if for any cost functions $c$ that are consistent with the ordinal preference $\sigma$, the allocation ${\cal A}(\sigma)$ given by the algorithm is an $\alpha$-$\MMS$ allocation.
That is $c_i({\cal A}(\sigma)) \le \alpha \cdot \MMS_i$ for all $i$.
A randomized algorithm ${\cal A}$ returns a distribution over $\Pi(M)$ and is called $\alpha$-approximate MMS if for any cost functions (consistent with the ordinal ranking) $c_{1},\ldots, c_{n}$,
\begin{equation*}
\mathbf{E}_{x\sim \mathcal{A}(\sigma)} \left[\max_{i\in N}\frac{c_{i}(x)}{\MMS_{i}}\right] \leq \alpha.
\end{equation*}

\paragraph{Remark.}
It is necessary and more interesting to define the approximation as the expectation of the maximum ratio over all agents.
If the $\alpha$-approximation is defined as for every agent $i$, $\mathbf{E}_{x\sim \mathcal{A}(\sigma)}c_{i}(x)\leq \alpha\cdot\MMS_{i}$, the problem becomes trivial as uniform-randomly allocating all items gives an exact $\MMS$ allocation.

\paragraph{Strategyproof Algorithm.}
In this work, we also study the situation when the cost rankings $\sigma_i$ are private preferences of agents.
Each agent may misreport her true ranking in order to minimize her own cost for the allocation.
We call an algorithm \emph{strategyproof} if no agent can unilaterally misreport her ranking to reduce her cost.
Formally, a deterministic algorithm $\mathcal{A}$ is called {\em strategyproof} if for every agent $i$, ranking $\sigma_{i}$ and the ranking profile $\sigma_{-i}$ of other agents, 
\[
c_{i}(\mathcal{A}(\sigma_{i},\sigma_{-i}))\leq c_{i}(\mathcal{A}(\sigma'_{i},\sigma_{-i})) \text{ holds for all $\sigma'_{i}$.}
\]
We call a randomized algorithm $\mathcal{A}$ {\em strategyproof in expectation} if for every $i$, $\sigma_{i}$ and $\sigma_{-i}$, 
\[
\mathbf{E}_{x\sim \mathcal{A}(\sigma_{i},\sigma_{-i})}c_{i}(x)
\leq \mathbf{E}_{x\sim \mathcal{A}(\sigma'_{i},\sigma_{-i})} c_{i}(x) \text{ holds for all $\sigma'_{i}$.}
\]

\section{Approximate Maximin Share with Ordinal Preferences}
\label{sec:algorithmic}

In this section we consider the problem of computing an allocation of items that is approximately MMS based on the ordinal rankings of agents for items, and prove the results listed in Table~\ref{table:algorithm}.

\subsection{Identical Ordinary Preference and Allocation Sequence}

We first note that we can assume without loss of generality that all agents have {\em identical ordinary preference} (IDO), where a chore allocation instance is called IDO if $\sigma_i(k) = \sigma_j(k)$ for agents $i,j$ and index $k$. 
The original statement is proved for goods in \citep{journals/aamas/BouveretL16} and \citep{journals/teco/BarmanK20}, which is then adapted to  chores by \citep{corr/HuangL19}.

\begin{lemma}[\citep{corr/HuangL19}]
\label{lem:ido}
Suppose that there is an algorithm that runs in $T(n,m)$ time and returns an
$\alpha$-MMS allocation for all IDO instances. Then, there is an algorithm running in time $T(n,m)+O(nm\log m)$ outputing an $\alpha$-MMS  allocation for all instances that are not necessarily IDO.
\end{lemma}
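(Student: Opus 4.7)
The plan is to reduce any instance to an IDO one by relabeling items according to each agent's own ranking. Given a cost profile $c$ and ranking profile $\sigma$, I would construct an auxiliary IDO instance on items $\{1,\ldots,m\}$ by setting agent $i$'s cost for new item $k$ to be $c'_{ik} := c_{i\sigma_i(k)}$, that is, her $k$-th largest original cost. In this auxiliary instance every agent shares the ranking $1 \succ 2 \succ \cdots \succ m$, so it is IDO. The key observation is that for every $i$ the multiset $\{c'_{ik}\}_k$ equals $\{c_{ij}\}_j$, and since $\MMS_i$ as defined in \eqref{eq:mms:positive} depends only on this multiset (renaming items within a single agent's cost function does not change the min-max over partitions), $\MMS_i$ is identical in the two instances.

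Next I would run the hypothesized $T(n,m)$-time IDO solver on the auxiliary instance to obtain an allocation $(S_1,\ldots,S_n)$ of the slots $\{1,\ldots,m\}$ satisfying $\sum_{k \in S_i} c_{i\sigma_i(k)} \le \alpha \cdot \MMS_i$ for every $i$. The crux is then to convert $(S_i)$ into an allocation $(T_i)$ of the real items without increasing any agent's cost. I would process slots in reverse order $k = m, m-1, \ldots, 1$: at step $k$, letting $i$ be the unique agent with $k \in S_i$, I would assign to $i$ any currently-unallocated real item from the set $A_{i,k} := \{\sigma_i(k), \sigma_i(k+1), \ldots, \sigma_i(m)\}$, i.e., the bottom $m-k+1$ items in $i$'s own ranking. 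A short pigeonhole argument guarantees feasibility: just before step $k$, exactly $m-k$ items have been handed out, whereas $|A_{i,k}| = m-k+1$, so $A_{i,k}$ still contains at least one unallocated item.

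The item $i$ receives at step $k \in S_i$ lies in $A_{i,k}$ and hence has cost at most $c_{i\sigma_i(k)}$ under her original cost function; summing over $k \in S_i$ yields $c_i(T_i) \le \sum_{k \in S_i} c_{i\sigma_i(k)} \le \alpha \cdot \MMS_i$, so $(T_i)$ is an $\alpha$-MMS allocation for the original instance. The added overhead is $O(nm\log m)$ to sort each agent's items (to obtain the $\sigma_i$), plus $O(m)$ for the reverse-order loop, matching the stated time bound.

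The only substantive issue is choosing the conversion direction correctly. Forward order (from $k = 1$ upwards) does not give the right inclusion: the allowed sets $A_{i,k}$ shrink as $k$ grows, so one needs the number of items already handed out to be small precisely when the menu is small, which forces reverse order. Once this is recognized the pigeonhole step is immediate, and the remaining ingredients — MMS preservation and additivity — are routine consequences of the definitions.
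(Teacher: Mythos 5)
Your proposal is correct and follows essentially the same route as the paper's sketch: construct the IDO instance by sorting each agent's costs (noting that $\MMS_i$ depends only on the cost multiset), run the IDO solver, and then convert back by processing positions in reverse order $k = m, \ldots, 1$ so that a pigeonhole argument guarantees the agent holding slot $k$ can always take an unallocated item of cost at most $c_{i\sigma_i(k)}$. The paper phrases the conversion as each agent picking her favorite remaining item in that reverse order, which is just a particular choice from your set $A_{i,k}$, so the two arguments coincide.
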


We provide some high-level ideas for the proof of Lemma~\ref{lem:ido} as follows. For a formal proof, please refer to \citep{corr/HuangL19}.
For any instance $\cal I$ with parameters $N,M,c,\sigma$ that is not IDO, 
we create a corresponding IDO instance $\cal I'$  where 
the costs are defined as $c'_{ij} = c_{i, \sigma_i(j)}$ for all $i\in N$ and $j\in M$.
In other words, in $\cal I'$, item 1 is most costly and $m$ is least costly to every agent. 
Consequently, the resulting instance is IDO; moreover, the MMS values do not change.
Suppose we have an $\alpha$-approximation algorithm for IDO instances $\cal I'$.
Let $\pi_j \in N$ be the agent that receives item $j$ in the allocation. 
Then we have a length-$m$ sequence of ``picking ordering'' of agents $(\pi_m,\ldots,\pi_1)$.
Going back to $\cal I$, 
if we let agent $\pi_j$ pick her favorite unselected item (with lowest cost) in the order of $j = m, m-1, \ldots, 2, 1$, each agent's cost will not be higher than her cost in $\cal I'$ and thus the resulting allocation is also $\alpha$-MMS.

\smallskip

Accordingly, in the following, it suffices to only focus on IDO instances.
Assume items are ordered decreasingly regarding their costs: for any agent $i\in N$, we have 
\begin{equation*}
c_{i1}\geq c_{i2}\geq \ldots \geq c_{im}.
\end{equation*}
To simplify our statements, in this section we assume that $m\gg n$. Note that this is without loss of generality as we can append a sufficiently large number of items with cost $0$ for everyone to $M$.
The remaining part of this section focuses on the computation of  an allocation sequence $\pi \in N^m$ (a length-$m$ sequence of agents), where $\pi_j$ is the agent that receives item $j$. 
Since an allocation algorithm is uniquely defined by an allocation sequence, we use terms ``allocation algorithm'' and ``allocation sequence'' interchangeably.

\paragraph{Allocation sequence.}
One of the most well-known allocation sequences is {\em round-robin}, where the sequence is defined as $[1,\ldots,n,1,\ldots,n, \ldots]$.
That is, for $j = 1,2,\ldots, m$, we allocate item $j$ to agent $((j-1) \mod n) + 1$, until all items are allocated.
Observe that we can compactly represent the round-robin sequence as $\pi = [1,\ldots, n]^*$, which means that $\pi$ is obtained by repeating the \emph{pattern} $[1,\ldots,n]$ until the sequence has length $m$ (and the last replica may not be complete).
Like round-robin, in this paper we also focus on sequences with a certain pattern $p \in N^k$, for some $k\leq m$.
Formally speaking, the allocation sequence $\pi\in N^m$ with pattern $p\in N^k$ is obtained by repeating the pattern $p$ until $\pi$ has length $m$.
We denote the full sequence as $\pi = p^*$, and call it a {\em periodic} allocation sequence.

Recall that a round-robin algorithm achieves a $(2 - \frac{1}{n})$ approximation ratio \citep{aaai/AzizRSW17}.
In the following, we improve this approximation via a carefully designed periodic allocation sequence. 




\subsection{Upper Bounds}

In this section, we define the desired allocation sequences, and prove the approximation ratios (of MMS).
We first show the following technical lemma, which will be useful in the later analysis.

\begin{lemma}\label{lemma:two-out-of-every-k}
	Consider a sequence of items $S = \{j_1,j_2,\ldots,j_k\}$, ordered in descending order of costs.
	Suppose an agent $i$ receives two items $\{ j_x, j_k\}$ from $S$, where $x \geq \frac{k}{2}$.
	Then we have $c_{i,j_x} + c_{i,j_k} \leq \frac{2}{k}\cdot c_i(S)$.
\end{lemma}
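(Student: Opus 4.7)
The plan is to exploit the descending-cost ordering on $S$ to lower-bound $c_i(S)$ by a simple expression involving only $c_{i,j_x}$ and $c_{i,j_k}$, and then to use the hypothesis $x \ge k/2$ to convert that bound into the desired inequality.

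First I would split the sum $c_i(S) = \sum_{t=1}^{k} c_{i,j_t}$ at index $x$. For the first block $t = 1,\ldots,x$, the ordering gives $c_{i,j_t} \ge c_{i,j_x}$, so the block contributes at least $x \cdot c_{i,j_x}$. For the second block $t = x+1,\ldots,k$, the ordering gives $c_{i,j_t} \ge c_{i,j_k}$ (trivially true as $c_{i,j_k}$ is the minimum), so the block contributes at least $(k-x) \cdot c_{i,j_k}$. Thus
\begin{equation*}
c_i(S) \;\ge\; x \cdot c_{i,j_x} + (k-x) \cdot c_{i,j_k}.
\end{equation*}

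Next I would multiply by $2/k$ and compare with the target $c_{i,j_x} + c_{i,j_k}$. The difference
\begin{equation*}
\tfrac{2}{k}\bigl[x \cdot c_{i,j_x} + (k-x) \cdot c_{i,j_k}\bigr] - (c_{i,j_x} + c_{i,j_k})
\;=\; \tfrac{2x-k}{k}\bigl(c_{i,j_x} - c_{i,j_k}\bigr)
\end{equation*}
is nonnegative because $2x - k \ge 0$ (by the hypothesis $x \ge k/2$) and $c_{i,j_x} \ge c_{i,j_k}$ (by the descending order and $x \le k$). Chaining this with the previous bound yields $c_{i,j_x} + c_{i,j_k} \le \tfrac{2}{k} c_i(S)$.

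There is no genuine obstacle here; the lemma is a one-line algebraic consequence of the ordering. The only subtlety is choosing the right split point, and the statement essentially tells us to split at $x$ itself so that the ``slack factor'' $(2x-k)/k$ that appears is exactly what the hypothesis $x \ge k/2$ controls.
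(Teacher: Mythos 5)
Your proof is correct and follows essentially the same route as the paper's: both establish the key lower bound $c_i(S) \ge x\cdot c_{i,j_x} + (k-x)\cdot c_{i,j_k}$ from the ordering and then use $x \ge k/2$ to conclude, the only cosmetic difference being that you compare via a nonnegative difference while the paper bounds the ratio. No gaps.
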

\begin{proof}
	For convenience, let $a = c_{i,j_x}$ and $b=c_{i,j_k}$, where $a\geq b$.
	We have
	\begin{equation*}
	c_i(S) \geq x\cdot a + (k-x)\cdot b,
	\end{equation*}
	which implies
	\begin{equation*}
	\frac{c_{i,j_x} + c_{i,j_k}}{c_i(S)} \leq \frac{a+b}{x\cdot a + (k-x)\cdot b}  
	= \frac{a+b}{k\cdot b + x\cdot (a-b)}
	\leq \frac{a+b}{k\cdot b + \frac{k}{2}\cdot (a-b)} = \frac{2}{k},
	\end{equation*}
	where the second inequality follows from $x\geq \frac{k}{2}$. 
\end{proof}

Next, we define a periodic allocation algorithm, called {\em Sesqui-Round Robin} ($\SRR$), where the length of the repeating pattern is roughly $1.5n$.

\begin{algorithm}[htbp]
	\caption{\textsf{Sesqui-Round Robin} Algorithm.\label{alg:SRR}}
	\textbf{Input}: IDO instance with $c_{i1}\geq c_{i2}\geq \ldots \geq c_{im}$ for all $i\in N$.
	
	Initialize: $X_i = \emptyset$ for all $i \in N$.
	
	Set $p = \left[1,2,\ldots,n-1,n,n,n-1,\ldots,\lfloor\frac{n}{2}\rfloor+1\right]$.
	
	\For{$j = 1,2,\ldots,m$}
	{	
		$a = (j - 1 \mod |p|) + 1$ and $X_{p(a)} = X_{p(a)} \cup \{j\}$.
	}
	
	\textbf{Output}: Allocation $X=(X_1,\ldots,X_n)$.
\end{algorithm}

\paragraph{\textsf{Sesqui-Round Robin} ($\SRR$).}
We define the pattern of the periodic allocation sequence as
\begin{equation*}
p = \left[1,2,\ldots,n-1,n,n,n-1,\ldots,\lfloor\frac{n}{2}\rfloor+1\right].
\end{equation*}
For example, for $n = 2$ agents, the full sequence is $\pi = [1,2,2]^*$; for $n = 3$ the sequence is $\pi = [1,2,3,3,2]^*$.
Since the items are ordered in non-increasing order of their costs and incentive is not a concern, $\SRR$ is essentially a heavy cost first sequential allocation algorithm according to the repeating pattern $p$.
Intuitively, within each pattern, (1) each agent from 1 to $n$ is assigned an item and this part is the same with round-robin;
(2) then each agent in the second half of $[n]$ is assigned one more item but according to the reverse order because they have advantage in (1).
The pseudocode is provided in Algorithm~\ref{alg:SRR}.

\begin{theorem}[Approximation Ordinal Algorithms]\label{th:ordinal-main}
	Algorithm $\SRR$ is
	\begin{itemize}
		\item $4/3$-approximate MMS for $n=2$;
		\item $7/5$-approximate MMS for $n=3$;
		\item $5/3$-approximate MMS for any $n\geq 4$.
	\end{itemize}
\end{theorem}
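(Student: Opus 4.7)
The plan is to invoke Lemma~\ref{lem:ido} to restrict attention to IDO instances and then analyse, for each agent separately, the bundle assigned by $\SRR$ under the fixed pattern $p$ of length $L = n + \lceil n/2\rceil$. In an IDO instance, agent $i$'s bundle $X_i$ depends only on her index: \emph{lower-half} agents ($i \le \lfloor n/2\rfloor$) receive one item per period, at position $i$; \emph{upper-half} agents ($i \ge \lfloor n/2\rfloor + 1$) receive two items per period, at positions $i$ and $2n - i + 1$. So the proof reduces to bounding $c_i(X_i)$ as a function of $M_i$ in each of these two structural cases.

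For a lower-half agent I would use a clean averaging bound: because each $c_{i,rL+i}$ with $r\ge 1$ is the smallest of the $L$ items whose positions lie in $[(r-1)L+i+1,\,rL+i]$, summing over $r$ gives $\sum_{r\ge 1} c_{i,rL+i} \le \tfrac{1}{L}\bigl(c_i(M) - \sum_{j\le i} c_{i,j}\bigr)$. Combined with $c_{i,i}\le c_{i,1}\le M_i$ and $c_i(M)\le n M_i$ from Lemma~\ref{lem:mms:bound}, this yields $c_i(X_i) \le \bigl(1 + \tfrac{n-i}{L}\bigr) M_i$, which is maximised at $i=1$. Plugging in $L=3$ for $n=2$, $L=5$ for $n=3$, and $L\ge 6$ for $n\ge 4$ (where $(n-1)/L \le 2/3$) produces the claimed constants $4/3$, $7/5$ and $\le 5/3$ respectively.

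For an upper-half agent, set $p = i$ and $q = 2n-i+1$. The main tool is Lemma~\ref{lemma:two-out-of-every-k}: taking $S_r$ to be the first $q$ items of period $r$, whenever $p \ge q/2$ (equivalently $i \ge (2n+1)/3$) the lemma yields $c_{i,(r-1)L+p} + c_{i,(r-1)L+q} \le \tfrac{2}{q}\, c_i(S_r)$, and summing over $r$ with $\sum_r c_i(S_r) \le c_i(M) \le n M_i$ gives $c_i(X_i) \le \tfrac{2n}{q}\, M_i$. A direct check shows this already meets the stated bound for $n=2,3,4$ (giving ratios at most $4/3$, $7/5$, $8/5$) and for every upper-half index with $i$ close to $(2n+1)/3$.

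The main obstacle will be the remaining upper-half indices: the ``middle'' ones with $i$ just above $\lfloor n/2\rfloor$ (where Lemma~\ref{lemma:two-out-of-every-k} does not apply to the natural window) and, for large $n$, the top ones $i\approx n$ where $\tfrac{2n}{q}$ exceeds $5/3$. For these I would strengthen the pigeonhole argument of Lemma~\ref{lem:mms:bound} by noting that, for any $k$, the smallest $\lceil k/n\rceil$ items among $\{1,\dots,k\}$ must all share a bundle in any $n$-partition, so $M_i \ge \sum_{j = k - \lceil k/n\rceil + 1}^{k} c_{i,j}$; in particular $M_i \ge c_{i,n}+c_{i,n+1}$, which alone absorbs the first period's contribution into a single $M_i$. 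Applying this at $k = (r-1)L + q$ period-by-period, together with a generalisation of Lemma~\ref{lemma:two-out-of-every-k} to windows in which the agent receives the two smallest items (which replaces $\tfrac{2}{k}$ by a sharper $\tfrac{2}{s}$ with $s = \lceil k/n \rceil$), produces period-wise bounds whose sum one must control. The delicate point is that these bounds are not simultaneously saturated: tightness of the first-period bound forces $c_{i,1}=\cdots=c_{i,n+1}$, which in turn forces later periods' contributions to drop off; formalising this joint trade-off --- so that the per-period bounds telescope to $\tfrac{5}{3} M_i$ rather than summing naively to an $O(\log R)$ blowup --- is the heart of the proof for $n\ge 4$.
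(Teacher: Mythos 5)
Your overall skeleton --- reduce to IDO via Lemma~\ref{lem:ido}, split agents into those receiving one item per period and those receiving two, and bound each by window-averaging plus Lemma~\ref{lemma:two-out-of-every-k} and pigeonhole lower bounds on $\MMS_i$ --- is the same as the paper's, and your lower-half analysis is correct and matches the paper's Case~1. But the upper-half analysis has a concrete error and a genuine gap. First, your ``direct check'' for $n=3$ fails: for agent $3$ you have $q=2n-i+1=4$, so your bound is $\tfrac{2n}{q}\MMS_3=\tfrac{3}{2}\MMS_3$, not $\tfrac{7}{5}\MMS_3$. The paper closes this exactly with the pigeonhole fact you mention only in passing ($\MMS_3\geq c_{33}+c_{34}$, since two of the first four items must share a bundle in any $3$-partition), treating items $3,4$ separately with $f=(c_{33}+c_{34})/\MMS_3$ and then applying the $\tfrac{2}{5}$-window bound to full periods $\{5,\ldots,9\},\{10,\ldots,14\},\ldots$, yielding $f+\tfrac{2}{5}(3-2f)\leq\tfrac{7}{5}$. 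Your window of size $q=4$ is simply the wrong window there.

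Second, for $n\geq 4$ the agents you defer --- roughly $i\in[\lfloor n/2\rfloor+1,\,2n/3]$ and $i\gtrsim 4n/5$ --- are the bulk of the work, and your sketched plan (period-wise pigeonhole bounds that must ``telescope'' rather than sum to a blowup) is not carried out and is not how the argument actually closes. The paper handles the middle range by choosing a \emph{shifted} window of full period length $k$, namely $S_t=\{i+(t-1)k+1,\ldots,i+tk\}$, inside which the agent's earlier item sits at relative position $2(n-i)+1\geq k/2$ precisely when $i\leq n-\tfrac{k-2}{4}$; Lemma~\ref{lemma:two-out-of-every-k} then gives $\tfrac{2}{k}$ per window with $k\geq 1.5n$, and the first-period item is absorbed via $f=c_{ii}/\MMS_i$, giving $1+\tfrac{2(n-i)}{k}\leq\tfrac{5}{3}$. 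For the top range it treats the two first-period items via $f_1=c_{ii}/\MMS_i$ and $f_2=c_{i,2n-i+1}/\MMS_i$ and proves the combinatorial Claim~\ref{claim:f1-and-f2} (either $f_1+f_2\leq 1$, or all heavy items are singletons in the MMS partition and some bundle holds three light items, forcing $f_2\leq\tfrac{1}{3}$), then maximizes the resulting linear expression in $f_1,f_2$ by sign analysis of the coefficients. Neither the shifted-window idea nor the $f_1,f_2$ dichotomy appears in your proposal, so the $n\geq 4$ case remains open as written.
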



We prove Theorem~\ref{th:ordinal-main} by proving the following three lemmas.


\begin{lemma}\label{lemma:4/3-approx}
	$\SRR$ is $4/3$-approximate MMS for $n=2$. 
\end{lemma}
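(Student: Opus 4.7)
The plan is to prove the two agents' bounds separately, exploiting the simple structure of the periodic sequence $\pi=[1,2,2]^{*}$. I would first assume, without loss of generality, that $m$ is divisible by $3$; if it is not, I append the requisite number of zero-cost items, which changes neither $\MMS_{1}, \MMS_{2}$ nor the cost that $\SRR$ charges each agent on the original items. Under this assumption agent $1$ receives exactly $\{1,4,7,\ldots,m-2\}$ and agent $2$ receives the remaining two items from each triple $T_{k}=\{3k+1,3k+2,3k+3\}$, $k=0,\ldots,m/3-1$.

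For agent $2$ the bound will fall out of Lemma~\ref{lemma:two-out-of-every-k} applied triple by triple. Inside $T_{k}$ agent $2$ receives the two cheaper items (positions $3k+2$ and $3k+3$); setting $k=3$ and $x=2$ in the lemma yields $c_{2,3k+2}+c_{2,3k+3}\le\tfrac{2}{3}\,c_{2}(T_{k})$. Summing over $k$ gives $c_{2}(X_{2})\le\tfrac{2}{3}c_{2}(M)$, and combining with $\MMS_{2}\ge\tfrac{1}{2}c_{2}(M)$ from Lemma~\ref{lem:mms:bound} produces the desired $\tfrac{4}{3}\MMS_{2}$.

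Agent $1$ is the more delicate case because she can, in principle, hold a substantial fraction of the total cost. Her cost decomposes as $c_{1}(X_{1})=c_{1,1}+A$ where $A:=\sum_{k\ge 1}c_{1,3k+1}$. The key observation is that the monotonicity $c_{1,3k-1}\ge c_{1,3k}\ge c_{1,3k+1}$ forces every tail summand to be at most the average of the two items immediately preceding it:
\[
c_{1,3k+1}\le\tfrac{1}{2}\bigl(c_{1,3k-1}+c_{1,3k}\bigr).
\]
Summing over $k\ge 1$ yields $A\le B/2$, where $B:=\sum_{k\ge 1}(c_{1,3k-1}+c_{1,3k})$ is the cost that agent $1$ assigns to the items given to agent $2$. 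Since $c_{1}(M)=c_{1,1}+B+A\ge c_{1,1}+3A$, Lemma~\ref{lem:mms:bound} gives $c_{1,1}+3A\le 2\MMS_{1}$, so together with $c_{1,1}\le\MMS_{1}$ (also from Lemma~\ref{lem:mms:bound}),
\[
c_{1}(X_{1})=c_{1,1}+A\le\frac{2c_{1,1}+2\MMS_{1}}{3}\le\frac{4}{3}\MMS_{1}.
\]

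The only nontrivial step is the averaging inequality for agent $1$, which works precisely because agent $2$ consumes two items of $T_{k}$ that are no cheaper than the one item agent $1$ receives; once this is in hand everything reduces to the two universal lower bounds on $\MMS$ from Lemma~\ref{lem:mms:bound}. The padding step is pure bookkeeping, and no case analysis on $m\bmod 3$ remains after it.
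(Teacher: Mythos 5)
Your proof is correct and follows essentially the same route as the paper's: the paper also bounds agent $2$ by the ``two cheapest of every three'' observation (giving $c_2(X_2)\le\tfrac{2}{3}c_2(M)\le\tfrac43\MMS_2$), and for agent $1$ derives exactly your inequality $3\sum_{k\ge1}c_{1,3k+1}\le c_1(M)-c_{11}$ from the within-triple monotonicity before combining it with $c_1(M)\le 2\MMS_1$ and $c_{11}\le\MMS_1$. The only cosmetic differences are your explicit invocation of Lemma~\ref{lemma:two-out-of-every-k} for agent $2$ and the zero-cost padding, both of which are consistent with the paper's setup.
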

\begin{proof}
	For $n=2$, $\SRR$ has repeating pattern $[1,2,2]$.
	That is, we assign to agent $1$ item set $X_1 = \{1,4,7,\ldots\} = \{3k+1 \mid k \in \mathbb{Z}^+\} \cap M$ and 
	assign to agent $2$ item set $X_2 = \{ 2,3,5,6,8,9\ldots \} = \{3k+2, 3k+3 \mid k \in \mathbb{Z}^+\} \cap M$.\footnote{$\mathbb{Z}^+$ represents the set of all non-negative integers $\{0,1,2,\ldots\}$.}
	
	Recall that items are indexed in descending order of costs.
	Let us first consider agent $1$ and define $f := c_{11}/\MMS_1$.
	By the second statement in Lemma~\ref{lem:mms:bound}, we have $\MMS_1 \ge c_{11}$ and thus $f\in[0,1]$.
	Note that after receiving item $1$, agent $1$ gets the last one out of every three consecutive items.
	Since $c_{1,3j-1} \ge c_{1,3j} \ge c_{1,3j+1}$ for all $j = 1, \ldots, \lfloor \frac{m-1}{3} \rfloor$, then
	\[
	3 \cdot \sum_{j=1}^{\lfloor \frac{m-1}{3} \rfloor} c_{1,1+3j} \le \sum_{j=1}^{\lfloor \frac{m-1}{3} \rfloor} c_{1,3j-1} + c_{1,3j} + c_{1,3j+1} = c_1(M)-c_{11}.
	\]
	Thus
	\begin{equation*}
	c_1(X_1) = c_{11}+ \sum_{j=1}^{\lfloor \frac{m-1}{3} \rfloor} c_{1,1+3j} \leq f\cdot \MMS_1 + \frac{1}{3}\cdot \left( c_1(M)-c_{11} \right).
	\end{equation*}
	By the first statement in Lemma~\ref{lem:mms:bound}, we have $c_1(M) \leq 2\cdot \MMS_1$ and thus
	\begin{equation*}
	c_1(X_1) \leq \left(f + \frac{1}{3}\cdot (2-f) \right)\cdot \MMS_1
	= \frac{2}{3} (1+f)\cdot \MMS_1 \leq \frac{4}{3}\cdot \MMS_1.
	\end{equation*}

	\smallskip
	
	Next we consider agent $2$.
	Similarly, since agent $2$ receives two items (of smallest cost) out of every three consecutive items, 
	and $c_{2,3j-2} \ge c_{2,3j-1} \ge c_{2,3j}$ for all $j = 1, \ldots, \lfloor \frac{m}{3} \rfloor$, we have
	\[
	c_2(X_2) \leq \frac{2}{3}\cdot c_2(M) \leq \frac{4}{3}\cdot \MMS_2,
	\]
	where the inequality also comes from $c_2(M) \leq 2\cdot \MMS_2$.
\end{proof}

Next we consider the case when $n=3$.

\begin{lemma}\label{lemma:7/5-approx}
    $\SRR$ is $7/5$-approximate MMS for $n=3$. 
\end{lemma}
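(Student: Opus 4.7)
The plan is to analyze the three agents separately. Under the pattern $p = [1,2,3,3,2]$, the allocation sequence $\pi = p^{*}$ gives to agent~1 the items at positions $\{5k+1 : k \geq 0\}$, to agent~2 the items at positions $\{5k+2, 5k+5 : k \geq 0\}$, and to agent~3 the items at positions $\{5k+3, 5k+4 : k \geq 0\}$. For each agent I would partition $M$ (up to a small initial set handled by hand) into disjoint blocks of five consecutive items, bound the received cost inside each block using either a one-item average or Lemma~\ref{lemma:two-out-of-every-k}, and finally invoke Lemma~\ref{lem:mms:bound} via $c_i(M) \leq 3\,\MMS_i$.

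For agent~1, the argument mirrors Lemma~\ref{lemma:4/3-approx}: isolate item $1$ and use the blocks $\{5k+2,\ldots,5k+6\}$ for $k \geq 0$, in which agent~1 receives only the cheapest (last) item $5k+6$. This yields $c_1(X_1) \leq c_{1,1} + \tfrac{1}{5}\bigl(c_1(M) - c_{1,1}\bigr) \leq \tfrac{4}{5}\MMS_1 + \tfrac{3}{5}\MMS_1 = \tfrac{7}{5}\MMS_1$, using $c_{1,1} \leq \MMS_1$ and $c_1(M) \leq 3\MMS_1$ from Lemma~\ref{lem:mms:bound}.

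For agent~3, the pair $(5k+3, 5k+4)$ sits at positions $4$ and $5$ within the block $\{5k, 5k+1, 5k+2, 5k+3, 5k+4\}$, so Lemma~\ref{lemma:two-out-of-every-k} applies directly (with $x = 4$). Summing over $k \geq 1$ covers $M \setminus \{1,2,3,4\}$, and the leftover pair $(3,4)$ is absorbed via the pigeonhole bound $\MMS_3 \geq c_{3,3} + c_{3,4}$: in any partition of the top four items into three bundles, two of them share a bundle whose cost is at least $c_{3,3} + c_{3,4}$. Combined with $c_{3,1} + c_{3,2} \geq c_{3,3} + c_{3,4}$ and $c_3(M) \leq 3\MMS_3$, a short calculation gives $c_3(X_3) \leq \tfrac{7}{5}\MMS_3$.

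For agent~2, the natural period-$5$ block $\{5k+1,\ldots,5k+5\}$ fails the hypothesis of Lemma~\ref{lemma:two-out-of-every-k}, as her two received items sit at positions $2$ and $5$ (and $2 < 5/2$). The key step is to \emph{shift} the block: for $k \geq 1$, use $\{5k-2,5k-1,5k,5k+1,5k+2\}$, in which the received items $5k$ and $5k+2$ sit at positions $3$ and $5$, now satisfying the hypothesis. These shifted blocks are disjoint and cover $M \setminus \{1,2\}$; after peeling off item $2$ separately and using $c_{2,1} \geq c_{2,2}$ to absorb the residual $c_{2,1}$ term, the analysis reduces to $c_2(X_2) \leq \tfrac{1}{5}c_{2,2} + \tfrac{2}{5}c_2(M) \leq \tfrac{7}{5}\MMS_2$. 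The main obstacle is precisely finding this shift for agent~2, since the obvious period-$5$ decomposition does not permit a direct application of Lemma~\ref{lemma:two-out-of-every-k}; a small auxiliary ingredient used for agent~3 is the pigeonhole inequality $\MMS_3 \geq c_{3,3}+c_{3,4}$, which is not in Lemma~\ref{lem:mms:bound} but follows in one line.
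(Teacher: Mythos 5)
Your proposal is correct and follows essentially the same route as the paper: the same agent-by-agent case analysis, the same shifted five-item blocks $S_t=\{5t-2,\ldots,5t+2\}$ for agent~2 so that Lemma~\ref{lemma:two-out-of-every-k} applies with the received items at positions $3$ and $5$, and the same pigeonhole bound $\MMS_3\geq c_{3,3}+c_{3,4}$ for agent~3, all closed out with $c_i(M)\leq 3\,\MMS_i$ from Lemma~\ref{lem:mms:bound}.
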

\begin{proof}
For $n=3$, the allocation sequence has pattern $[1,2,3,3,2]$.
In the following, we consider the three agents separately and the reasoning is similar to that of Lemma \ref{lemma:4/3-approx}. 

\smallskip
\noindent
{\bf Agent $1$.}
Let $c_{11} = f\cdot \MMS_1$, where $f\in[0,1]$.
Note that after receiving the first item, agent $1$ receives one out of every $5$ consecutive items.
Hence
\begin{align*}
c_1(X_1) & \leq f\cdot \MMS_1 + \frac{1}{5}\cdot \left( c_1(M)-c_{11} \right) \\
& \leq \left( f+\frac{1}{5}\cdot (3-f) \right)\cdot\MMS_1 \leq \frac{7}{5}\cdot\MMS_1,
\end{align*}
where the second inequality holds due to $c_1(M)\leq 3\cdot \MMS_1$.

\smallskip
\noindent
{\bf Agent $2$.}
Let $c_{22} = f\cdot \MMS_2$, where $f\in [0,1]$.
Note that after receiving item $2$, for every $t=1,2,\ldots$, among the $5$ consecutive items
\begin{equation*}
S_t = \{ 3 + 5(t-1), 4+5(t-1), \ldots, 7+5(t-1) \},
\end{equation*}
agent $2$ receives the third item $5+5(t-1)$ and the last item $7+5(t-1)$.

By Lemma~\ref{lemma:two-out-of-every-k}, the total cost of items agent $2$ receives after item $2$ is at most $\frac{2}{5}\cdot \sum_{j=3}^m c_{2j}$.
Hence we have
\begin{align*}
c_2(X_2) & \leq f\cdot \MMS_2 + \frac{2}{5}\cdot \left( c_2(M) - c_{21} - c_{22} \right) \\
& \leq \left( f+\frac{2}{5}\cdot (3-2f) \right)\cdot\MMS_2 \leq \frac{7}{5}\cdot\MMS_2.
\end{align*}

\smallskip
\noindent
{\bf Agent $3$.}
Let $c_{33} + c_{34} = f\cdot \MMS_3$.
Note that among the first four items $\{ 1,2,3,4 \}$, at least two of them must appear in the same bundle of the MMS allocation of agent $3$.
Hence we have $\MMS_3 \geq c_{33} + c_{34}$, which implies $f\in [0,1]$.
Also note that $c_{31} + c_{32} + c_{33} + c_{34} \geq 2\cdot (c_{33} + c_{34}) = 2f\cdot \MMS_3$.

Observe that after receiving items $3$ and $4$, agent $3$ receives two items (of smallest cost) out of every $5$ consecutive items.
Hence we have
\begin{align*}
c_3(X_3) & \leq f\cdot \MMS_3 + \frac{2}{5}\cdot \left( c_3(M) - \sum_{j=1}^4 c_{3j} \right) \\
& \leq \left( f+\frac{2}{5}\cdot (3-2f) \right)\cdot\MMS_3 \leq \frac{7}{5}\cdot\MMS_3.
\end{align*}

Hence all agents receive a bundle of cost at most $\frac{7}{5}$ times her MMS value, and the lemma follows.
\end{proof}

Finally, we show that the approximation ratio of $\SRR$ is at most $\frac{5}{3}$, for any $n\geq 4$.

\begin{lemma}\label{lemma:5/3-approx}
    $\SRR$ is $5/3$-approximate MMS for $n \ge 4$. 
\end{lemma}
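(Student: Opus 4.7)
The plan is to analyze each agent separately, splitting into \emph{first-half} agents ($1 \le i \le \lfloor n/2 \rfloor$, each receiving one item at pattern-position $i$ per period) and \emph{second-half} agents ($\lfloor n/2 \rfloor + 1 \le i \le n$, each receiving two items at pattern-positions $i$ and $2n-i+1$ per period). Throughout, let $L = n + \lceil n/2 \rceil$ be the length of the pattern, so $L \ge \tfrac{3n}{2}$. The template of the argument mirrors Lemmas~\ref{lemma:4/3-approx} and~\ref{lemma:7/5-approx}, but applying Lemma~\ref{lemma:two-out-of-every-k} to second-half agents requires choosing the correct window depending on $i$.

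For a first-half agent $i$, I will bound $c_i(X_i)$ by noting that, after position $i$, agent $i$ receives one item per $L$ consecutive items. Grouping items into blocks of length $L$ and using that the received item is the smallest in its block (after the very first one) gives $c_i(X_i) \le c_{i,i} + \tfrac{1}{L}\bigl(c_i(M) - \sum_{j=1}^{i} c_{i,j}\bigr) \le c_{i,i}\bigl(1 - \tfrac{i}{L}\bigr) + \tfrac{1}{L} c_i(M)$. Plugging in $c_{i,i} \le \MMS_i$ and $c_i(M) \le n \cdot \MMS_i$ yields $c_i(X_i) \le \bigl(1 + \tfrac{n-i}{L}\bigr)\MMS_i$, maximized at $i=1$; since $L \ge \tfrac{3n}{2}$, this is strictly less than $5/3$.

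For a second-half agent $i$, I will apply Lemma~\ref{lemma:two-out-of-every-k} to length-$L$ windows that each contain exactly two of agent $i$'s items. Two natural windows arise: $W_t = [(t-1)L + 2n-i+2,\ tL + 2n-i+1]$ pairs the two items of pattern $t+1$ and Lemma~\ref{lemma:two-out-of-every-k} applies iff $i \ge n + \tfrac{1}{2} - \tfrac{L}{4}$; while $W'_t = [(t-1)L + i + 1,\ tL + i]$ pairs the second item of pattern $t$ with the first item of pattern $t+1$ and applies iff $i \le n + \tfrac{1}{2} - \tfrac{L}{4}$. A quick check shows that every second-half agent satisfies at least one of the two conditions for $n \ge 4$. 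The resulting inequalities are, respectively,
\[
c_i(X_i) \le c_{i,i} + c_{i,2n-i+1} + \tfrac{2}{L}\bigl(c_i(M) - \textstyle\sum_{j=1}^{2n-i+1} c_{i,j}\bigr) \quad\text{and}\quad c_i(X_i) \le c_{i,i} + \tfrac{2}{L}\bigl(c_i(M) - \textstyle\sum_{j=1}^{i} c_{i,j}\bigr).
\]
I then lower-bound the subtracted sums by $i\,c_{i,i} + (2n-2i+1)\,c_{i,2n-i+1}$ and $i\,c_{i,i}$ respectively, producing a linear expression in $c_{i,i}$ and $c_{i,2n-i+1}$ to maximize.

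The main obstacle is the $W_t$ case when the coefficient of $c_{i,2n-i+1}$ in the bound is positive: using only $c_{i,2n-i+1} \le \MMS_i$ gives values exceeding $5/3$ (for example $16/9$ at $n=6$, $i=6$). The key is the sharper inequality $c_{i,2n-i+1} \le \tfrac{1}{2}\MMS_i$, obtained from the pigeonhole observation that among the top $2n-i+1$ items distributed into $n$ bundles, some bundle must contain at least two items, each of cost at least $c_{i,2n-i+1}$. Combining this with $c_{i,i} \ge c_{i,2n-i+1}$ and maximizing the resulting linear form in the right regime ($c_{i,i} = c_{i,2n-i+1}$ when the coefficient of $c_{i,i}$ is non-positive, $c_{i,i} = \MMS_i$ otherwise) produces a bound of the form $\tfrac{2n}{L} + \text{(small correction)}$, which evaluates to at most $5/3$ for every second-half $i$ in both parity cases of $n$.
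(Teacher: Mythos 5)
Your proof is correct and follows the paper's own decomposition almost exactly: the same split into one-item-per-period agents and two-item-per-period agents, the same further split of the latter by which window alignment makes Lemma~\ref{lemma:two-out-of-every-k} applicable (your $W'_t$/$W_t$ conditions are precisely the paper's Case-2/Case-3 thresholds $i \lessgtr n-\frac{k-2}{4}$), and the same linear form in $f_1 = c_{i,i}/\MMS_i$ and $f_2 = c_{i,2n-i+1}/\MMS_i$. The one genuine deviation is at the end: the paper proves Claim~\ref{claim:f1-and-f2} (either $f_1+f_2\le 1$ or $f_2\le\frac13$) via a structural case analysis of agent $i$'s MMS partition into bundles with heavy/light items, whereas you use only the unconditional pigeonhole bound $f_2\le\frac12$ (among the top $2n-i+1\ge n+1$ items, two must share a bundle). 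This weaker bound does suffice: when the $f_1$-coefficient is positive, plugging $f_1=1$, $f_2=\frac12$ into $\frac{2n}{k}+\frac{k-2i}{k}f_1+\frac{k-2(2n-2i+1)}{k}f_2$ collapses exactly to $\frac{3k-2}{2k}<\frac32$, and when it is non-positive, setting $f_1=f_2\le\frac12$ gives $1+\frac{i-1}{k}\le 1+\frac{n-1}{1.5n}<\frac53$. So your argument is marginally more elementary (it reuses the same pigeonhole step already used for agent $3$ in the $n=3$ analysis and avoids the separate Claim entirely), at the cost of a slightly looser intermediate bound that still clears $\frac53$ with room to spare.
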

\begin{proof}
	Recall that the repeating pattern of the sequence is
	\begin{equation*}
	\left[1,2,\ldots,n-1,n,n,n-1,\ldots,\lfloor\frac{n}{2}\rfloor+1 \right].
	\end{equation*}
	
	For convenience we let $k = 2n-\lfloor \frac{n}{2} \rfloor$ be the length of the pattern. Note that we have $k = \frac{3n}{2}$ when $n$ is even; $k = \frac{3n+1}{2}$ when $n$ is odd.
	Fix any agent $i\in[n]$, we show that the set of items $X_i$ agent $i$ receives satisfies $c_i(X_i) \leq \frac{5}{3}\cdot \MMS_i$.
	
	\smallskip
	\noindent
	{\bf Case-1: $i \leq \lfloor\frac{n}{2}\rfloor$.}
	The algorithm assigns to agent $i$ the following items:
	\begin{equation*}
	X_i = \{ i, i + k, i + 2k, \ldots \}.
	\end{equation*}	
	Let $c_{ii}  = f\cdot \MMS_i$, where $f \in [0,1]$.
	Observe that after receiving item $i$, agent $i$ gets the item with minimum cost out of every $k$ items.
	Hence we have
	\begin{align*}
	c_i(X_i) & \leq  f\cdot \MMS_i + \frac{1}{k}\cdot \sum_{j = i+1}^m c_{ij} \leq f\cdot\MMS_i + \frac{2}{3n}\cdot \left( c_i(M) - \sum_{j=1}^i c_{ij} \right)\\
	& \leq f\cdot\MMS_i + \frac{2}{3n}\cdot \left( n\cdot \MMS_i - i\cdot f\cdot\MMS_i \right) \\
	& \leq \left( f + \frac{2}{3} \right)\cdot \MMS_i \leq \frac{5}{3}\cdot\MMS_i.
	\end{align*}
	
	\smallskip
	\noindent
	{\bf Case-2: $\lfloor\frac{n}{2}\rfloor+1 \leq i \leq n-\frac{k-2}{4}$.}
	Note that agent $i$ receives item $i$ first, then for every $t = 1,2,\ldots$, among the $k$ items
	\begin{equation*}
	S_t = \{ i+(t-1)k+1,i+(t-i)k+2,\ldots,i+t\cdot k \},
	\end{equation*}
	agent $i$ receives item $i+(t-1)k + 2(n-i)+1$ (the $(2(n-i)+1)$-th item in $S_t$) and item $i+t\cdot k$ (the last item in $S_t$).
	Observe that for $i\leq n-\frac{k-2}{4}$,
	\begin{equation*}
	2(n-i)+1 \geq \frac{k-2}{2}+1 = \frac{k}{2}.
	\end{equation*}	
	Hence by Lemma~\ref{lemma:two-out-of-every-k}, for every $t = 1,2,\ldots$ we have
	\begin{equation*}
	c_{i,i+(t-1)k + 2(n-i)+1} + c_{i,i+t\cdot k} \leq \frac{2}{k}\cdot c_i(S_t).
	\end{equation*}	
	As before, let $v_i = f\cdot \MMS_i$, where $f\in [0,1]$. We have
	\begin{align*}
	c_i(X_i) \leq & f\cdot \MMS_i + \frac{2}{k}\cdot \sum_{j = i+1}^m c_{ij} 
	= f\cdot\MMS_i + \frac{2}{k}\cdot \left( c_i(M) - \sum_{j=1}^i c_{ij} \right) \\
	\leq & f\cdot\MMS_i + \frac{2}{k}\cdot \left( n\cdot\MMS_i - i\cdot f \cdot \MMS_i\right) \\
	= & \left( \frac{2n}{k} + (1-\frac{2i}{k})\cdot f \right)\cdot \MMS_i	\leq \left( 1+ \frac{2(n-i)}{k} \right)\cdot \MMS_i	.
	\end{align*}	
	For $k = 2n-\lfloor\frac{n}{2} \rfloor$ and $i \geq \lfloor \frac{n}{2} \rfloor+1$, we have $\frac{n-i}{k} \leq \frac{0.5 n}{1.5 n} = \frac{1}{3}$, which implies
	\begin{equation*}
	c_i(X_i) \leq \left( 1+\frac{2}{3} \right)\cdot \MMS_i = \frac{5}{3}\cdot \MMS_i.
	\end{equation*}
	
	\smallskip
	\noindent
	{\bf Case-3: $i \geq n-\frac{k-2}{4}+1$.}
	Note that agent $i$ receives items
	\begin{equation*}
	X_i = \{ i, 2n-i+1, i+k, 2n-i+1+k, i+2k, 2n-i+1+2k,\ldots \}.
	\end{equation*}	
	In other words, agent $i$ receives items $i$ and $2n-i+1$ first, then for every $t = 1,2,\ldots$, among the $k$ items
	\begin{equation*}
	S_t = \{ 2n-i+2+(t-1)k, 2n-i+3+(t-i)k,\ldots,2n-i+1+t\cdot k \},
	\end{equation*}
	agent $i$ receives item $i+t\cdot k$ (the $(k-2(n-i)-1)$-th item in $S_t$) and item $2n-i+1+t\cdot k$ (the last item in $S_t$).	
	Observe that for $i\geq n-\frac{k-2}{4}+1$,
	\begin{equation*}
	k-2(n-i)-1 \geq k - 2(\frac{k-2}{4}-1)-1 = \frac{k}{2}+2 > \frac{k}{2}.
	\end{equation*}	
	Hence by Lemma~\ref{lemma:two-out-of-every-k}, for every $t = 1,2,\ldots$, the two items agent $i$ receives from $S_t$ have total cost	at most $\frac{2}{k}\cdot c_i(S_t)$.	
	Next, we bound the total cost $c_i(X_i)$ of agent $i$, taking into account the first two items agent $i$ receives.
	
	Let $v_i = f_1 \cdot \MMS_i$ and $v_{2n-i+1} = f_2 \cdot \MMS_i$, where $1\geq f_1 \geq f_2\geq 0$.
	
	\begin{claim}\label{claim:f1-and-f2}
		We have either $f_1 + f_2 \leq 1$ or $f_2 \leq \frac{1}{3}$.
	\end{claim}
	For continuity of presentation, we defer the proof of Claim \ref{claim:f1-and-f2} to the end of this subsection.
	By definition of $f_1$ and $f_2$ we have
	\begin{align*}
	c_i(X_i) \leq &\; f_1\cdot \MMS_i + f_2\cdot \MMS_i + \frac{2}{k}\cdot \sum_{j=2n-i+2}^m c_{ij} \\
	\leq &\; (f_1+f_2)\cdot \MMS_i + \frac{2}{k}\cdot \left( n\cdot\MMS_i -  \sum_{j=1}^{2n-i+1} c_{ij} \right)
	\end{align*}
	
	Note that for all $j\leq 2n-i+1$, we have $c_{ij} \geq f_2\cdot \MMS_i$; for all $j\leq i$, we have $c_{ij} \geq f_1\cdot \MMS_i$.
	Hence we have
	\begin{equation*}
	\sum_{j=1}^{2n-i+1} c_{ij} \geq i\cdot \Big(f_1 + (2n-2i+1)\cdot f_2 \Big)\cdot \MMS_i,
	\end{equation*}
	which implies
	\begin{align*}
	\frac{c_i(X_i)}{\MMS_i} \leq &\;  f_1 + f_2 + \frac{2}{k}\cdot \Big(n - i\cdot f_1 - (2n-2i+1)\cdot f_2 \Big) \\
	= &\; \frac{2n}{k} + \frac{k-2i}{k}\cdot f_1 + \frac{k - 2(2n-2i+1)}{k}\cdot f_2.
	\end{align*}
	
	Observe that the coefficient of $f_2$ is always positive since
	\begin{equation*}
	2n-2i+1 \leq 2n - 2(n-\frac{k-2}{4}+1)+1 = \frac{k}{2}-2 < \frac{k}{2}.
	\end{equation*}
	
	If $2i \geq k$, then the coefficient of $f_1$ is non-positive, and thus the maximum of RHS is achieved when $f_1 = f_2$.
	Note that when $f_1 = f_2$, by Claim~\ref{claim:f1-and-f2}, we have $f_2 \leq \frac{1}{2}$, which implies
	\begin{align*}
	\frac{c_i(X_i)}{\MMS_i} \leq &\; \frac{2n}{k} + \frac{2k - 4n + 2i - 1}{k}\cdot f_2 \\
	\leq &\; \frac{4n}{2k} + \frac{2k - 4n + 2i - 1}{2k} = 1+\frac{2i-1}{2k} < 1+\frac{2n}{1.5 n} = \frac{5}{3}.
	\end{align*}
	
	If $2i < k$, then using the fact that $i \geq n-\frac{k-2}{4}+1$, we have
	\begin{align*}
	\frac{c_i(X_i)}{\MMS_i} \leq &\; \frac{2n}{k} + \frac{k-2i}{k}\cdot f_1 + \frac{k - 2(2n-2i+1)}{k}\cdot f_2 \\
	\leq &\; \frac{2n}{k} + \frac{k-2n+\frac{k-2}{2}-2}{k}\cdot f_1 + \frac{k - 2(2n-k+1)}{k}\cdot f_2\\
	= &\; \frac{2n}{k} + \frac{3k-4n-6}{2k}\cdot f_1 + \frac{3k - 4n-2}{k}\cdot f_2 \\
	\leq &\; \frac{4}{3} + \frac{1}{6}\cdot f_1 + \frac{1}{3}\cdot f_2 = \frac{4}{3} + \frac{1}{3}\cdot (\frac{f_1}{2} +f_2).
	\end{align*}
	where the last inequality holds since $k\geq 1.5 n$.
	It not difficult to check that by Claim~\ref{claim:f1-and-f2}, $\frac{f_1}{2} + f_2 \leq 1$, which implies $\frac{c_i(X_i)}{\MMS_i} \leq \frac{4}{3} + \frac{1}{3} = \frac{5}{3}$.
\end{proof}

Combining Lemmas~\ref{lemma:4/3-approx},~\ref{lemma:7/5-approx} and~\ref{lemma:5/3-approx}, we have proved Theorem~\ref{th:ordinal-main}.
It remains to prove Claim \ref{claim:f1-and-f2}.

\begin{proofof}{Claim \ref{claim:f1-and-f2}}
	We call items $\{ 1,2,\ldots,i \}$ \emph{heavy} items and items $\{ i+1, i+2,\ldots,2n-i+1 \}$ \emph{light} items.
	Note that every heavy item must have cost at least $f_1\cdot \MMS_i$ and every light item must have cost at least $f_2\cdot \MMS_i$.
	Now consider the MMS allocation of agent $i$.
	If there exists a bundle containing both heavy and light items, or two heavy items, then we have
	\begin{equation*}
	\MMS_i \geq f_1\cdot \MMS_i + f_2\cdot \MMS_i,
	\end{equation*}
	which implies $f_1 + f_2 \leq 1$.
	Otherwise we know that if a bundle contains a heavy items, then it is a singleton.
	Note that there are $i$ heavy items, $2(n-i)+1$ light items and $n$ bundles.
	Hence there must exists a bundle containing three light items, which implies $\MMS_i \geq 3f_2\cdot \MMS_i$ and thus $f_2 \leq \frac{1}{3}$.
\end{proofof}

\subsection{Lower Bounds}

In the following, we give hard instances showing that the approximation ratios we obtained for $n \leq 3$ are optimal for deterministic ordinal algorithms.

\begin{theorem}[Hardness for Deterministic Algorithms]\label{th:hardness-ordinal}
	No deterministic ordinal algorithm has approximation ratio (w.r.t. MMS) smaller than
	\begin{itemize}
		\item $4/3$ for $n=2$;
		\item $7/5$ for $n=3$.
	\end{itemize}
\end{theorem}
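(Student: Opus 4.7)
The plan is to establish both lower bounds by the same adversarial strategy: for each $n \in \{2, 3\}$, fix an IDO ordinal ranking and let the algorithm commit to a partition $(X_1, \ldots, X_n)$; then, after observing this partition, choose a cardinal cost profile consistent with the ranking that forces the MMS ratio to attain $\alpha$. This works because a deterministic ordinal algorithm's output depends only on the ranking, not on the specific cardinal values.

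For $n = 2$ with $\alpha = 4/3$, I will use $m = 4$ items with ranking $1 > 2 > 3 > 4$. After relabeling so that $1 \in X_1$, I case-analyze on $|X_1|$. When $X_1 = \{1\}$, the uniform cost profile $(1, 1, 1, 1)$ makes $\MMS_2 = 2$ but $c_2(X_2) = 3$, yielding ratio $3/2$. When $|X_1| = 2$, i.e., $X_1 = \{1, j\}$ for some $j \in \{2, 3, 4\}$, the cost profile $(3, 1, 1, 1)$ has $\MMS_1 = 3$ (witnessed by the partition $\{1\} \mid \{2, 3, 4\}$) while $c_1(X_1) = 3 + 1 = 4$, giving ratio exactly $4/3$. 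When $|X_1| \geq 3$, uniform costs again give $c_1(X_1) \geq 3$ against $\MMS_1 = 2$, for ratio $\geq 3/2$. The critical step is the $|X_1| = 2$ case, which reduces to a short MMS computation.

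For $n = 3$ with $\alpha = 7/5$, the construction is structurally analogous but considerably more delicate, because for small values of $m$ the ordinal ranking alone pins down an MMS-optimal partition (the ``greedy balancing'' allocation), allowing some ordinal algorithm to achieve ratio $1$. To force a genuine suboptimality I plan to use the ranking $1 > 2 > \cdots > m$ on a larger $m$ (for instance $m = 9$, the smallest value at which the $\SRR$ analysis for $n = 3$ is tight) along with a small family of cost profiles sharing this ranking. The primary profile is the one that witnesses tightness in the upper-bound analysis: items $1, \ldots, 4$ cost $1$ and items $5, \ldots, 9$ cost $2/5$ (perturbed slightly to enforce strict ordering), under which $\MMS = 2$ and any bundle containing items $3, 4$ together with at least two light items has cost $14/5 = \frac{7}{5} \cdot \MMS$. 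A complementary near-uniform profile (e.g., heavy items of cost $1$ and light items of cost $9/10$) handles ``grouped'' allocations that place all five light items into one bundle, where $\MMS \approx 29/10$ while the light-only bundle has cost $9/2$. The main obstacle is ruling out intermediate ``mixed'' partitions that mingle heavy and light items and thus appear to evade both profiles; I expect to handle these via a pigeonhole argument (at least one bundle must contain two of the four heaviest items) combined with Lemma~\ref{lemma:two-out-of-every-k} applied to the light items in that bundle to lower-bound the worst agent's cost, and, if necessary, by adjoining additional profiles to the family to cover residual edge cases.
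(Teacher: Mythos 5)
Your overall adversarial setup (the algorithm commits to a partition based only on the ordinal ranking, then the adversary picks consistent cardinal costs) is exactly the paper's, and your $n=2$ argument is complete and coincides with the paper's proof: item $1$ to agent $1$, uniform costs if $|X_1|=1$, and $c_1=(3,1,1,1)$ with $\MMS_1=3$ if $|X_1|\ge 2$.

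The $n=3$ part, however, has a genuine gap: you never close the case analysis, and you say so yourself (``if necessary, by adjoining additional profiles to the family to cover residual edge cases''). The two profiles you name do not by themselves constrain the allocation enough. Your primary profile (four heavy items of cost $1$, five light items of cost $2/5$, $\MMS=2$) only punishes bundles that happen to contain $\{3,4\}$ plus two light items, and your near-uniform profile only punishes a bundle containing all five light items; an algorithm that, say, gives agent $3$ the bundle $\{3,4,9\}$ and spreads the other light items evades both, so the ``mixed partitions'' you flag as an obstacle are not an edge case but the main body of the proof. The missing idea is the one the paper uses: for each agent $i$ separately, choose a profile consisting of the item(s) that agent is already forced to hold at cost summing to $\MMS_i$, plus \emph{many items of equal tiny cost}, so that the approximation guarantee translates directly into an upper bound $y_i$ on the number of additional small items agent $i$ may receive; summing the three bounds then contradicts the fact that all $m-4$ remaining items must be allocated. (The paper runs this with $m\ge 2/\epsilon$ so the three bounds sum to strictly less than $m-4$; with your fixed $m=9$ the bounds only give $y_1+y_2+y_3\le 1+2+2=5=m-4$, which pins the allocation down uniquely rather than contradicting it, and you then need your heavy/light profile as a final blow against that unique pattern --- workable, but it is precisely the counting step plus the preliminary forcing of items $1,2,3$ to distinct agents and item $4$ to agent $3$ that your proposal omits.) As written, the $n=3$ lower bound is not established.
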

\begin{proof}
	We first give a hard instance for $n=2$.
	Consider the instance in which the $2$ agents have identical ranking on $m=4$ items $\{1,2,3,4\}$.
	Without loss of generality, assume the first item (with maximum cost) is given to agent $1$.
	If the agent $1$ is allocated only one item, then for the case when $c_2 = (1,1,1,1)$, the approximation ratio is $\frac{3}{2}$ since the agent $2$ has total cost $3$ while $\MMS_{2}=2$.
	Otherwise (agent $1$ gets $\geq 2$ items), for the case when $c_1 = (3,1,1,1)$, the approximation ratio is at least $\frac{4}{3}$, as agent $1$ has total cost at least $3+1 = 4$ while $\MMS_{1}=3$.
	
	\smallskip
	
	Next, we consider the case when $n=3$.
	Suppose there exists an allocation that is strictly better than $7/5 = 1.4$-approximate.
	Let $1.4 - \epsilon$ be the approximation ratio of the algorithm, where $\epsilon \in (0, 0.4)$.
	In the following we consider a few instances with $m \geq \frac{2}{\epsilon}$ items, in which the $3$ agents have identical ranking on the items.
	For convenience of discussion we fix $m$ to be an odd number.
	
	First, observe that the first three items must be allocated to three different agents, otherwise the approximation is at least $1.5$.
	Without loss of generality, suppose item $i\in \{1,2,3\}$ is allocated to agent $i$.
	Then item $4$ must be allocated to agent $3$, as otherwise when all agents have cost function $(2,2,1,1,0,\ldots,0)$, the approximation ratio is $1.5$.
	Next, we consider how the items $M' = \{ 5,6,\ldots,m \}$ are allocated.
	Let $y_1, y_2$ and $y_3$ be the number of items in $M'$ allocated to item $1,2$ and $3$, respectively.
	
	\smallskip
	\noindent
	{\bf Agent-$1$.}
	Consider the instance in which the cost function of agent $1$ is
	\begin{equation*}
	c_1 = (1, \frac{2}{m-1}, \frac{2}{m-1}, \ldots, \frac{2}{m-1}).
	\end{equation*}
	Note that since $m$ is odd, we have $\MMS_1 = 1$.
	To ensure an approximation ratio of $1.4 - \epsilon$, we have
	$c_1(X_1) = 1 + \frac{2\cdot y_1}{m-1} \leq 1.4 - \epsilon$,
	which implies
	\begin{equation*}
	y_1 \leq \frac{m-1}{2}\cdot (0.4-\epsilon) < 0.2\cdot m - 0.5\cdot \epsilon.
	\end{equation*}
	
	\smallskip
	\noindent
	{\bf Agent-$2$.}
	Now consider the instance in which
	\begin{equation*}
	c_2 = (1, 1, \frac{1}{m-2}, \frac{1}{m-2}, \ldots, \frac{1}{m-2}).
	\end{equation*}
	Note that $\MMS_2 = 1$.
	To ensure an approximation ratio of $1.4 - \epsilon$, we have
	$c_2(X_2) = 1 + \frac{y_2}{m-2} \leq 1.4 - \epsilon$,
	which implies
	\begin{equation*}
	y_2 \leq (m-2)\cdot (0.4-\epsilon) < 0.4\cdot m - \epsilon\cdot m \leq 0.4\cdot m - 2,
	\end{equation*}
	where the last inequality follows from $m\geq \frac{2}{\epsilon}$.
	
	\smallskip
	\noindent
	{\bf Agent-$3$.}
	Finally, we consider the instance in which
	\begin{equation*}
	c_3 = (\frac{1}{2}, \frac{1}{2}, \frac{1}{2}, \frac{1}{2}, \frac{1}{m-3}, \frac{1}{m-3}, \ldots, \frac{1}{m-3}).
	\end{equation*}
	Since there are $m-4$ items with cost $\frac{1}{m-3}$, and $m$ is odd, it is not difficult to verify that $\MMS_3 = 1$.
	To ensure an approximation ratio of $1.4 - \epsilon$, we have
	$c_3(X_3) = 1 + \frac{y_3}{m-3} \leq 1.4 - \epsilon$,
	which implies
	\begin{equation*}
	y_3 \leq (m-3)\cdot (0.4-\epsilon) < 0.4\cdot m - \epsilon\cdot m \leq 0.4\cdot m - 2.
	\end{equation*}
	
	However, observe that now we have $y_1 + y_2 + y_3 < m - 4$,
	which is a contradiction since there are $m-4$ items in $M'$.
\end{proof}

Combining Theorem \ref{th:hardness-ordinal} and Lemmas \ref{lemma:4/3-approx} and \ref{lemma:7/5-approx},
we have shown that our algorithm is optimal for $n=2$ and $n=3$.
It would be natural to conjecture that the algorithm achieves optimal approximation ratios for larger $n$.
Unfortunately, this is not true. 
We defer this discussion to Section \ref{sec:conclusion}.


\section{Strategyproof Maximin Share Allocations}\label{sec:sp}


In this section, we take a mechanism design perspective and design strategyproof algorithms that can also approximate MMS fairness.
We first note that periodic sequential picking algorithm is unlikely to be strategyproof.
The following example shows that round-robin cannot guarantee strategyproofness, even on two agents.

\begin{example}
	Suppose there are two agents and four items.
	The first agent has ranking $c_{11} < c_{12} < c_{13} < c_{14}$ on the items, in the ascending order of costs.
	The second agent has ranking $c_{24} < c_{22} < c_{21} < c_{23}$.
	Suppose that both agents report truthfully then the algorithm allocates items $\{1,2\}$ to agent $1$ and items $\{3,4\}$ to agent $2$.
	However, if the second agent report differently as $c_{22} < c_{24} < c_{21} < c_{23}$, then the algorithm will allocate items $\{1,3\}$ to agent $1$ and items $\{2,4\}$ to agent $2$.
	In other words, agent $2$ receives strictly better allocation by misreporting, and hence the algorithm is not strategyproof.
\end{example}

\subsection{Deterministic Algorithm}\label{ssec:det}

We present a deterministic sequential picking algorithm that is $O(\log\frac{m}{n})$-approximate and strategyproof.
Recall that when items are goods, \citep{ijcai/AmanatidisBM16} gave a deterministic $O(m-n)$-approximate strategyproof ordinal algorithm. 
In the following, we show that if all the items are chores, the approximation ratio can be to $O(\log \frac{m}{n})$.
Without loss of generality, we assume that $n$ and $m/n$ are at least some sufficiently large constant.
As otherwise it is trivial to obtain an $O(1)$-approximation by assigning $m/n$ arbitrary items to each agent.

\begin{theorem}\label{th:strategyproof-ordinal}
	There exists a deterministic strategyproof ordinal algorithm with approximation ratio $O(\log \frac{m}{n})$.
\end{theorem}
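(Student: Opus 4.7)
My plan is to design a consecutive picking algorithm $\CP$ with a carefully chosen schedule of bundle sizes $a_1, \ldots, a_n$, then verify both strategyproofness and the approximation bound. Fix the picking order as $1, 2, \ldots, n$, set $\alpha = \lceil \log(m/n) \rceil$ and $k = \lceil n / \alpha \rceil$, and split the agents into ``early'' (first $n-k$) and ``late'' (last $k$). Each late agent receives exactly $\alpha$ items (which, in an IDO instance, will be the most costly). For the early agents, the residual pool sizes $b_i := m - \sum_{j < i} a_j$ follow a geometric schedule with ratio $r = (m/(k\alpha))^{1/(n-k)}$, i.e., $b_{i+1} \approx b_i/r$ (with appropriate integer rounding), terminating at $b_{n-k+1} = k\alpha$ so the late agents collectively consume exactly $k\alpha$ items. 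The bundle sizes are $a_i = b_i - b_{i+1}$, geometrically decreasing in $i$.

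Strategyproofness is immediate from the structure. The $a_i$'s are fixed before any reports, and agent $i$ on her turn receives the $a_i$ items of lowest reported cost from the remaining pool. Since this pool depends only on earlier agents' reports (not hers), her best response is to pick her $a_i$ true favorites, which is precisely truthful reporting.

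For the approximation, I would first invoke Lemma~\ref{lem:ido} to reduce to IDO instances. Under IDO, agent $i$'s picked items occupy positions $[b_{i+1}+1, b_i]$ of the common decreasing-cost order. Lemma~\ref{lem:mms:bound} yields two cost bounds: $c_{ij} \leq \MMS_i$ and, via $j \cdot c_{ij} \leq c_i(M) \leq n \cdot \MMS_i$, also $c_{ij} \leq n\MMS_i/j$. For late agents, the picked positions lie entirely in $[1, k\alpha]$, so the first bound gives $c_i(X_i) \leq a_i \MMS_i = \alpha \MMS_i = O(\log(m/n)) \MMS_i$. For early agents, every picked position is $\geq b_{n-k+1}+1 > k\alpha \geq n$, so the second (tighter in this regime) bound applies:
\[
c_i(X_i) \leq n \MMS_i \sum_{j=b_{i+1}+1}^{b_i} \frac{1}{j} \leq n \MMS_i \ln\frac{b_i}{b_{i+1}} \leq n \MMS_i \ln r = \frac{n \ln(m/(k\alpha))}{n-k}\, \MMS_i = O(\log(m/n))\, \MMS_i,
\]
where the last step uses $k\alpha = \Theta(n)$ (making $\ln(m/(k\alpha)) = \Theta(\log(m/n))$) and $n/(n-k) = O(1)$ whenever $\alpha \leq n/2$.

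The main obstacle is the integer realization of the $a_i$'s: we need positive integers summing to $m$ while preserving $b_i/b_{i+1} \leq r$ along the geometric schedule. This can be handled by defining $b_{i+1} = \lceil b_i/r \rceil$ and absorbing any residual slack into one or two designated bundles without affecting the asymptotic bound. The edge cases $\alpha \geq n$ (extremely large $m$, where we take $k=1$ and handle the single late agent directly) and $m = O(n)$ (trivial $O(1)$ guarantee from $\lceil m/n \rceil$-size bundles) are handled separately.
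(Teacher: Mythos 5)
Your proposal is correct and takes essentially the same route as the paper: both use the $\CP$ serial-dictatorship with a fixed quota schedule that grows geometrically with ratio $1+\Theta\bigl(\log(m/n)/n\bigr)$, reserving a group of agents to absorb $O(\log\frac{m}{n})$ of the costliest leftover items each while the others take geometrically larger bundles from large residual pools, and strategyproofness follows identically from the quotas being report-independent. The only cosmetic difference is in the per-agent cost bound: you use the positional estimate $c_{ij}\le n\cdot\MMS_i/j$ and a harmonic sum, whereas the paper bounds the picked bundle by its average cost $c$ together with the pigeonhole inequality $\MMS_i\ge c\cdot\lceil(a_1+\cdots+a_{i-1})/n\rceil$; both give the same $O(\log\frac{m}{n})$ guarantee.
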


We first define another typical sequential picking algorithm, where each agent has a single chance to select items. 

\paragraph{$\CP$.}
Fix a sequence of integers $a_1, \ldots, a_n$ such that $\sum_{i\leq n}a_i = m$.
Order the agents arbitrarily. For $i=n,n-1,\ldots,1$, let agent $i$ pick $a_i$ items from the remaining items.
We do not restrict which items each agent should pick, but of course strategic agents want to select items with smallest cost.
The pseudocode is provided in Algorithm~\ref{alg:CP}.
Recall that $\sigma_i(1)$ is the least preferred item of agent $i$, and $\sigma_i(m)$ is the most preferred.

\begin{algorithm}[htbp]
\caption{$\CP$ Algorithm.\label{alg:CP}}

\textbf{Parameters}: Integers $a_1, \ldots, a_n$ such that $\sum_{i\leq n}a_i = m$.

\textbf{Input}: The ordinal preference $\sigma$ of agents.

Initialize: $X_i = \emptyset$ for all $i \in N$.

\For{$i = 1,2,\ldots,n$}
{	
	\For{$j = 1,2,\ldots,a_i$}
{	
	Let $e^* = \arg\max_{e\in M} \{\sigma^{-1}_{i}(e)\}$; Set $X_i = X_i \cup \{e^*\}$ and $M = M\setminus \{e^*\}$.
}	
}

\textbf{Output}: Allocation $X=(X_1,\ldots,X_n)$.
\end{algorithm}

We note that as long as $a_i$'s do not depend on the reported preferences of agents, the rule discussed above is the serial dictatorship rule for multi-unit demands.
When it is agent $i$'s turn to pick items, it is easy to see that her optimal strategy is to pick the top-$a_i$ items with the smallest cost, among the remaining items.
Hence immediately we have the following lemma.

\begin{lemma}
	For any $\{a_i\}_{i\leq n}$, $\CP$ is strategyproof.
\end{lemma}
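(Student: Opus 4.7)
The plan is to argue strategyproofness by decoupling agent $i$'s influence on the outcome into two independent parts. First I would observe that the residual set of items $M_i$ available when agent $i$ takes her turn depends only on the reports $\sigma_1, \ldots, \sigma_{i-1}$ and the fixed integers $a_1, \ldots, a_{i-1}$; it is entirely independent of the report $\sigma_i$ that agent $i$ submits. This follows by direct inspection of the pseudocode of Algorithm~\ref{alg:CP}, since each earlier round $j < i$ only queries $\sigma_j$ and $a_j$ when selecting items.

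Next, conditional on $M_i$, I would argue that agent $i$ effectively faces the following optimization problem: choose a subset $S \subseteq M_i$ of size $a_i$ so as to minimize $c_i(S)$. Because $c_i$ is additive with nonnegative entries, the minimum is attained by the $a_i$ items in $M_i$ of smallest true cost. The $\CP$ algorithm, given agent $i$'s reported ranking $\sigma'_i$, assigns her exactly the $a_i$ items in $M_i$ that are top-ranked under $\sigma'_i$ (since the inner loop repeatedly picks $\arg\max_{e \in M} (\sigma'_i)^{-1}(e)$). When $\sigma'_i = \sigma_i$ (truthful), these top-ranked items coincide with the items of smallest true cost in $M_i$, so agent $i$ attains the optimum of her problem. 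For any other $\sigma'_i$, she obtains some other size-$a_i$ subset of $M_i$, whose cost is at least this optimum. This is precisely the defining inequality of strategyproofness.

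I do not anticipate a serious obstacle here. The claim essentially reduces to the standard folklore that serial dictatorship with additive preferences is strategyproof. The only routine care needed is to verify the indexing convention, namely that since $\sigma_i(m)$ is agent $i$'s most preferred item, the rule $e^* = \arg\max_{e \in M} \sigma_i^{-1}(e)$ indeed selects the lowest-cost remaining item, and to confirm that the parameters $a_1, \ldots, a_n$ are treated as data-independent constants rather than as quantities computed from the reports.
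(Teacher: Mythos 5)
Your proposal is correct and follows essentially the same reasoning as the paper, which simply observes that $\CP$ is a serial dictatorship for multi-unit demands: the set of items remaining at agent $i$'s turn is independent of her report, and with additive costs her best achievable outcome is the $a_i$ cheapest remaining items, which truthful reporting secures. The extra care you take with the indexing convention and the data-independence of the $a_i$'s is exactly the verification the paper leaves implicit.
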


It remains to prove the approximation ratio.

\begin{lemma}\label{lem:ai}
	There exists a sequence $\{a_i\}_{i\leq n}$ such that the approximation ratio of $\CP$ is $O(\log \frac{m}{n})$.
\end{lemma}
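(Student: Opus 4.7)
The plan is to (i) identify the worst-case block of items each agent is forced to take under $\CP$, (ii) bound her cost by a harmonic-weighted MMS sum, and (iii) design $\{a_i\}$ via a greedy harmonic packing so that this sum is $O(\log(m/n))$ for every agent.

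First, I would pin down the worst case. Writing $k_i = \sum_{j\le i} a_j$, the adversary may set $c_j = c_i$ for all $j < i$; then the first $i-1$ agents strip off agent $i$'s $k_{i-1}$ cheapest items, and agent $i$, picking greedily, is forced to take the $a_i$ cheapest of what remains. In her own cost-descending ranking $\sigma_i$ these are exactly positions $q_i := m - k_i + 1, \ldots, p_i := m - k_{i-1}$.

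Next, I would invoke a direct strengthening of Lemma \ref{lem:mms:bound}: since the top $r$ items of $c_i$ must be distributed among $n$ MMS bundles, some bundle contains at least $\lceil r/n \rceil$ of them, hence $\MMS_i \geq \lceil r/n \rceil \cdot c_{i\sigma_i(r)}$ for every $r$. Summing,
\begin{equation*}
c_i(X_i) \;\leq\; \sum_{r=q_i}^{p_i} c_{i\sigma_i(r)} \;\leq\; \MMS_i \cdot S_i, \qquad S_i := \sum_{r=q_i}^{p_i} \frac{1}{\lceil r/n \rceil}.
\end{equation*}
It thus suffices to choose $\{a_i\}$ so that $\max_i S_i = O(\log(m/n))$.

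The construction is a one-dimensional greedy packing. Fix $R := H_{\lceil m/n \rceil} + 2 = \Theta(\log(m/n))$, sweep $r$ from $1$ to $m$ accumulating $1/\lceil r/n \rceil$, and start a new block whenever the running sum would exceed $R$. Each block has weight at most $R$ by construction, and since every single term is at most $1$, each non-final block has weight strictly larger than $R-1$. Combined with the estimate $\sum_{r=1}^{m} 1/\lceil r/n \rceil \leq n H_{\lceil m/n \rceil}$, the total number of blocks is at most $nH_{\lceil m/n \rceil}/(R-1) + 1 \leq n$. Label the blocks $I_1,\ldots,I_k$ from left to right ($k \le n$), let $a_i$ be the length of $I_{n-i+1}$ (and $a_i=0$ if that index exceeds $k$); then agent $i$'s worst-case interval $[q_i,p_i]$ coincides with $I_{n-i+1}$, so $S_i \le R$ for every $i$.

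The main obstacle is the combinatorial packing step --- showing that $n$ blocks of harmonic weight $O(\log(m/n))$ suffice to cover $[1, m]$. Once this is in place, the other ingredients (worst-case characterization, the generalized MMS bound, and strategyproofness of $\CP$ already established) are routine, and the lemma follows.
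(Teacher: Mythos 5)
Your proof is correct, but it takes a genuinely different route from the paper's. The paper lower-bounds $\MMS_i$ in one shot: letting $c$ be the \emph{average} cost of the $a_i$ items agent $i$ picks, every item still unallocated afterwards costs at least $c$, so by pigeonhole $\MMS_i \geq c\cdot\lceil (a_1+\cdots+a_{i-1})/n\rceil$ and the ratio reduces to $\max_i a_i/\lceil(a_1+\cdots+a_{i-1})/n\rceil$; it then exhibits an explicit sequence ($a_i=2$ for the first half of the agents, then $a_i\approx K(1+K/n)^{i-n/2-1}$ with $K=2\log(m/n)$) and verifies by direct computation both that the ratio is at most $K$ and that the sequence sums to at least $m$. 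You instead use the per-rank pigeonhole bound $\MMS_i\geq\lceil r/n\rceil\cdot c_{i\sigma_i(r)}$, which converts the ratio into the harmonic weight $\sum_{r\in I_i}1/\lceil r/n\rceil$ of the interval of ranks agent $i$ can be forced into, and then obtain the $a_i$'s implicitly by greedily packing $[1,m]$ (total weight at most $nH_{\lceil m/n\rceil}$) into at most $n$ blocks of weight at most $H_{\lceil m/n\rceil}+2$. Your decomposition is additive over ranks rather than based on a single average, which is what makes the packing reduction possible; it makes the coverage condition $\sum_i a_i=m$ automatic, makes the source of the $\log(m/n)$ transparent (total harmonic mass divided among $n$ agents), and even yields a slightly better constant than the paper's $K=2\log(m/n)$. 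The paper's construction, by contrast, gives a closed form for the $a_i$'s. Two small points to tighten: the claim $c_i(X_i)\leq\sum_{r=q_i}^{p_i}c_{i\sigma_i(r)}$ should be stated for \emph{arbitrary} reports, not just the adversarial instance you describe --- the one-line argument is that when agent $i$ picks, at most $k_{i-1}$ items are gone, so the $j$-th cheapest remaining item costs at most the $(k_{i-1}+j)$-th cheapest overall; and your picking order is the reverse of the paper's indexing (your agent $1$ picks first), which is harmless but worth flagging to match Algorithm~\ref{alg:CP}.
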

\begin{proof}
	We first establish a lower bound on the approximation ratio in terms of $\{a_i\}_{i\leq n}$. Then we show how to fix the numbers appropriately to get a small ratio.
	Let $r$ be the approximation ratio of the algorithm.
	
	Consider the moment when agent $i$ needs to pick $a_i$ items.
	Recall that at this moment, there are $\sum_{j\leq i} a_j$ items, and the $a_i$ ones with the smallest cost will be chosen by agent $i$.
	Let $c$ be the average cost of items agent $i$ picks, i.e., $c_i(X_i) = c\cdot a_i$.
	On the other hand, each of the $\sum_{j\leq i-1} a_j$ items left has cost at least $c$.
	Thus we have $\MMS_i \geq c \cdot \left\lceil \frac{a_1+\ldots+a_{i-1}}{n} \right\rceil$ and
	\begin{equation*}
	r = \max_{i\in N}\left \{\frac{c_i(X_i)}{\MMS_i} \right\} \leq \max_{i\in N} \left\{\frac{a_i}{\left\lceil \frac{a_1+\ldots+a_{i-1}}{n} \right\rceil} \right\}.
	\end{equation*}
	
	It suffices to compute a sequence of $a_1,\ldots, a_n$ that sum to $m$ and minimize this ratio.
	Fix $K= 2\log \frac{m}{n}$. Let
	\begin{equation*}
	a_i = \begin{cases}
	2, & i\leq \frac{n}{2}, \\
	\min\{  m-\sum_{j<i}a_j, \left\lceil K\cdot (1+\frac{K}{n})^{i-\frac{n}{2}-1} \right\rceil \}, & i> \frac{n}{2}.
	\end{cases}
	\end{equation*}
	
	Note that the first term of $\min\{\cdot,\cdot\}$ is to guarantee we leave enough items for the remaining agents.
	Moreover, truncating $a_i$ is only helpful for minimizing the approximation ratio and thus we only need to consider the case when $a_i$ equals the second term of $\min\{\cdot,\cdot\}$.
	In the following, we show that
	\begin{enumerate}
		\item all items are picked: $\sum_{i\in N}a_i = m$;
		\item for every $i>\frac{n}{2}$: $a_i \leq K\cdot \left\lceil \frac{a_1+\ldots+a_{i-1}}{n} \right\rceil$.
	\end{enumerate}
	
	Note that for $i\leq \frac{n}{2}$, since agent $i$ receives $2$ items, the approximation ratio is trivially guaranteed.
	The first statement holds because
	\begin{align*}
	 & \sum_{i=1}^\frac{n}{2} 2 + \sum_{i=\frac{n}{2}+1}^n \left( K\cdot (1+\frac{K}{n})^{i-\frac{n}{2}-1} \right)
	 \\
	 =  &\sum_{i \leq \frac{n}{2}} \left( K\cdot (1+\frac{K}{n})^{i-1} \right) + n =   (1+\frac{K}{n})^\frac{n}{2}\cdot n - n + n \geq 2^\frac{K}{2}\cdot n^2 > m,
	\end{align*}
	and $a_i$'s will be truncated when their sum exceeds $m$.
	
	For $i>\frac{n}{2}$, observe that (let $l = i-\frac{n}{2}-1$)
	\begin{equation*}
	\frac{1}{n} \sum_{j=1}^{i-1} a_j  = 1 + \frac{1}{n}\sum_{j = 1}^{l} K\cdot (1+\frac{K}{n})^{j-1} 
	  = 1 + (1+\frac{K}{n})^{l} - 1 = (1+\frac{K}{n})^{l}.
	\end{equation*}	
	Thus we have
	\begin{equation*}
	\textstyle a_i \leq \left\lceil K\cdot (1+\frac{K}{n})^{l} \right\rceil \leq K\cdot \left\lceil  (1+\frac{K}{n})^{l} \right\rceil \leq K\cdot \left\lceil \frac{a_1+\ldots+a_{i-1}}{n} \right\rceil,
	\end{equation*}
	as claimed.
\end{proof}

We conclude this section by showing that our approximation ratio is asymptotically optimal for all $\CP$ algorithms.

\begin{lemma}[Limits of $\CP$]
	The $\CP$ algorithm (with any $\{a_i\}_{i\in N}$) has approximation ratio $\Omega(\log \frac{m}{n})$.
\end{lemma}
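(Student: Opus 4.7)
Given any $(a_1,\ldots,a_n)$ with $\sum_i a_i = m$, write $b_i = a_1+\cdots+a_i$ and $d_i = m - b_{i-1} = \sum_{j\ge i} a_j$, so $d_1 = m$, $d_{n+1}=0$, and $a_i = d_i - d_{i+1}$. For each target index $i^\star \in [n]$ I build an IDO instance in which every agent shares the same cost vector: item $j$ has cost $1$ if $j \le d_{i^\star}$ and cost $0$ otherwise, with a negligible tie-breaking perturbation so that the ranking is a legitimate strict $\sigma$. There are exactly $m - d_{i^\star} = b_{i^\star-1}$ cost-$0$ items, i.e.\ exactly the total amount the first $i^\star-1$ agents will grab. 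Since every strategic agent picks her cheapest available items, a short inductive argument (the cost-$0$ pool after the $k$-th turn has size $b_{i^\star-1}-b_k \ge a_{k+1}$ whenever $k+1 \le i^\star-1$) shows the cost-$0$ pool is drained precisely by the end of agent $(i^\star-1)$'s turn, so agent $i^\star$ is forced to pick $a_{i^\star}$ cost-$1$ items at total cost $a_{i^\star}$. Distributing the $d_{i^\star}$ cost-$1$ items as evenly as possible across the $n$ bundles yields $\MMS_{i^\star} = \lceil d_{i^\star}/n\rceil$, and hence the approximation ratio of $\CP$ is at least
\[
r \;\ge\; \max_{1 \le i^\star \le n}\ \frac{a_{i^\star}}{\lceil d_{i^\star}/n\rceil}.
\]

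\textbf{From the pointwise bound to $\Omega(\log(m/n))$.} It remains a purely combinatorial statement to show that this max is $\Omega(\log(m/n))$ for any non-increasing integer sequence with $d_1 = m$ and $d_{n+1}=0$. Assuming the max equals $r$, the bound $a_i \le r\lceil d_i/n\rceil \le r(d_i+n)/n$ combined with the substitution $e_i := d_i + n$ produces the clean multiplicative recursion
\[
e_{i+1} \;\ge\; e_i\Bigl(1 - \tfrac{r}{n}\Bigr).
\]
Unrolling from $e_1 = m+n$ to $e_{n+1} = n$ yields $(1 - r/n)^n \le n/(m+n)$. Combining this with the elementary inequality $(1-x) \ge e^{-x/(1-x)}$ for $x \in [0,1)$ gives $rn/(n-r) \ge \log(1 + m/n)$. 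For $r \le n/2$ this immediately forces $r \ge \tfrac12 \log(m/n)$; otherwise $r > n/2$, which is $\Omega(\log(m/n))$ in the standing regime $\log(m/n) = O(n)$.

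\textbf{Main obstacle.} The technical crux is the analytic step above: translating a per-agent constraint on the ratios into a multiplicative contraction of the shifted partial sums $e_i = d_i + n$, and then arguing that this contraction cannot possibly bring $e_{n+1}$ down to $n$ within $n$ iterations starting from $m+n$ unless $r$ is already logarithmic in $m/n$. Verifying the hard instance itself is routine once one observes that taking the number of cheap items to be exactly $b_{i^\star-1}$ forces the first $i^\star-1$ strategic agents to deplete the cheap pool precisely when agent $i^\star$'s turn begins.
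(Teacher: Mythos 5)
Your proof is correct and follows essentially the same route as the paper's: a hard IDO instance in which the cheap items are exactly those consumed before agent $i^\star$'s turn forces $r \ge a_{i^\star}/\lceil d_{i^\star}/n\rceil$, and a geometric growth/contraction argument then shows the $a_i$'s cannot sum to $m$ unless $r = \Omega(\log \frac{m}{n})$. The differences are presentational: you make the hard instance explicit (the paper leaves it implicit, relying on the reader to reverse the upper-bound analysis) and you run the recursion on the tail sums $e_i = d_i + n$ rather than inducting on the $a_i$'s and summing a geometric series, which amounts to the same estimate.
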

\begin{proof}
	Fix $K = \frac{1}{4}\log \frac{m}{n}$. Suppose there exists a sequence of $\{a_i\}_{i\in N}$ such that the algorithm is $K$-approximate.
	Then the last agent to act must receive at most $K$ items, i.e., $a_1 \leq K$.
	Next, we show by induction on $i=2,3,\ldots,n$ that $a_i \leq K(1+\frac{2K}{n})^{i-1}$ for all $i\in N$.
	Suppose the statement is true for $a_1,\ldots, a_i$. Then if $a_{i+1} > K(1+\frac{2K}{n})^{i}$, we have
	\begin{equation*}
	\frac{a_{i+1}}{a_1+\ldots+a_{i+1}} > \frac{ K(1+\frac{2K}{n})^{i}}{k\cdot\frac{n}{2K}((1+\frac{2K}{n})^{i+1}-1)} \geq \frac{K}{n}.
	\end{equation*}	
	Thus we have
	\begin{equation*}
	\textstyle \sum_{i=1}^n a_i \leq n\cdot\left( (1+\frac{2K}{n})^n - 1 \right)
	\leq n\cdot \left( e^{2K} - 1 \right) < m,
	\end{equation*}
	which is a contradiction, since not all items are allocated.
\end{proof}

\subsection{Randomized Algorithm} \label{ssec:ran}

Via a carefully designed $\CP$ algorithm, we obtained a logarithmic approximation for the problem.
However, the algorithm may still have poor performance when the number of items is much larger than the number of agents, e.g., $m = 2^n$.
In this section, we present a randomized $O(\sqrt{\log n})$-approximation ordinal algorithm,
which is strategyproof in expectation.

Basically, if we randomly allocate all the items, one is able to show that the algorithm achieves an approximation of $O(\log n)$.
The drawback of this na{\" i}ve randomized algorithm is that it totally ignores the rankings of agents.
In the following, we show that if the agents have opportunities to decline some ``bad'' items,
the performance of this randomized algorithm improves to $O(\sqrt{\log n})$.
Note that since we already have an $O(\log \frac{m}{n})$-approximate deterministic algorithm for the ordinal model,
it suffices to consider the case when $m \geq n\log n$.

\paragraph{$\RD$.}
Let $K=\lfloor n\sqrt{\log n} \rfloor$.
Based on the ordering of items submitted by agents, for each agent $i$,
we label the $K$ items with the largest cost as ``large'', and the remaining as ``small''.
It can also be regarded as each agent reports a set $M_{i}$ of large items with $|M_{i}|=K$.
The algorithm operates in two phases.
\begin{itemize}
	\item Phase 1: every item is allocated to a uniformly-at-random chosen agent, independently.
	After all allocations, gather all the large items assigned to every agent into set $M_b$. Note that $M_{b}$ is also a random set.
	
	\item Phase 2: Redistribute the items in $M_{b}$ evenly to all agents: every agent gets $|M_b|/n$ random items.
\end{itemize}
The pseudocode is provided in Algorithm~\ref{alg:RD}.

\begin{algorithm}[htbp]
\caption{$\RD$ Algorithm.\label{alg:RD}}

\textbf{Input}: The ordinal preference $\sigma$ of agents.

Initialize: $X_i = \emptyset$ for all $i \in N$ and $M_b = \emptyset$.

For each $i\in N$: let $M_{i} = \{ \sigma_i(1),\sigma_i(2),\ldots, \sigma_i(K) \}$, where $K = \lfloor n\sqrt{\log n} \rfloor$.


\For{$j = 1,2,\ldots,m$}
{	
	Randomly and uniformly select an agent $i$ and set $X_i = X_i \cup \{j\}$.
}

\For{$i = 1,2,\ldots,n$}
{	
	Set $M_b = M_b \cup (M_i \cap X_i)$ and $X_i = X_i \setminus M_i$.
}


Randomly divide $M_b$ into $n$ bundles $(Y_1, \ldots, Y_n)$, each with size $|M_b|/n$.

\For{$i = 1,2,\ldots,n$}
{	
	Set $X_i = X_i \cup Y_i$.
}

\textbf{Output}: Allocation $X=(X_1,\ldots,X_n)$.
\end{algorithm}

\begin{theorem}\label{th:strategyproof-ordinal:random}
	There exists a randomized strategyproof ordinal algorithm with approximation ratio $O(\sqrt{\log n})$.
\end{theorem}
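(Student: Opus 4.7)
The plan is to verify two properties of $\RD$: strategyproofness in expectation, and the approximation bound $\mathbf{E}[\max_{i\in N}c_i(X_i)/\MMS_i]=O(\sqrt{\log n})$. For strategyproofness, I would observe that agent $i$'s report only influences the algorithm through her declared large set $M_i'=\{\sigma_i'(1),\ldots,\sigma_i'(K)\}$. Since Phase~1 places each item at a uniformly random agent independently of any report, agent $i$'s expected cost from the items she keeps after Phase~1 equals $\frac{1}{n}\sum_{j\notin M_i'}c_{ij}$. For Phase~2, item $j$ enters $M_b$ with probability $|\{k:j\in M_k'\}|/n$ and is then re-assigned to agent $i$ with probability $1/n$; summing the two phases and separating the terms that depend on $M_i'$, I would derive
\[
\mathbf{E}[c_i(X_i)]=\frac{c_i(M)}{n}-\frac{n-1}{n^2}\,c_i(M_i')+\bigl(\text{terms independent of }M_i'\bigr),
\]
so agent $i$ is incentivized to maximize $c_i(M_i')$ over all $K$-subsets of $M$, which is achieved precisely by truthful reporting.

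For the approximation bound, the central MMS inequality comes from pigeonhole: in any MMS-optimal partition for agent $i$, the $K=\lfloor n\sqrt{\log n}\rfloor$ costliest items are spread over $n$ bundles, so some bundle contains at least $\lceil K/n\rceil\ge \sqrt{\log n}$ of them, forcing $\MMS_i\ge \sqrt{\log n}\cdot c_{i,\sigma_i(K)}$. Consequently every item outside $M_i$ has individual cost at most $\MMS_i/\sqrt{\log n}$, while Lemma~\ref{lem:mms:bound} supplies $c_{ij}\le \MMS_i$ and $c_i(M)\le n\MMS_i$.

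I would then bound the two phases separately. In Phase~1 the kept cost $A_i:=c_i(\tilde X_i\setminus M_i)=\sum_{j\notin M_i}c_{ij}Z_{ij}$ is a sum of independent Bernoullis whose mean is at most $\MMS_i$, whose maximum summand is at most $\MMS_i/\sqrt{\log n}$, and whose variance is at most $\MMS_i^2/\sqrt{\log n}$; Bernstein's inequality then gives $\Pr[A_i>C\sqrt{\log n}\cdot\MMS_i]\le n^{-5}$ for a sufficiently large constant $C$. In Phase~2, $|M_b|$ is also a sum of independent Bernoullis with mean at most $K$, so Chernoff yields $\Pr[|M_b|>2K]\le e^{-K/3}$, and whenever $|M_b|\le 2K$ the deterministic bound $c_i(Y_i)\le |Y_i|\cdot\MMS_i=(|M_b|/n)\MMS_i\le 2\sqrt{\log n}\cdot\MMS_i$ applies, avoiding any further concentration step inside Phase~2. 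A union bound over the $n$ agents and over the two failure events then shows that with probability at least $1-1/n^2$ every agent satisfies $c_i(X_i)=A_i+c_i(Y_i)=O(\sqrt{\log n})\MMS_i$, while in the complementary event the trivial bound $c_i(X_i)\le c_i(M)\le n\MMS_i$ contributes only $O(1)$ to the expectation.

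The delicate part of the analysis will be the choice $K=\lfloor n\sqrt{\log n}\rfloor$: the ratio $K/n$ controls how small individual non-$M_i$ items are (driving Phase~1's Bernstein bound), while $K$ itself controls the expected size of $M_b$ (driving Phase~2), and only at $K\sim n\sqrt{\log n}$ do these two sources of error balance at $O(\sqrt{\log n})$.
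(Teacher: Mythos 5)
Your proposal is correct and follows essentially the same route as the paper: the strategyproofness argument derives the identical expected-cost formula $\frac{1}{n}c_i(M)-\frac{n-1}{n^2}c_i(M_i')+(\text{report-independent terms})$, and the approximation argument uses the same two-phase decomposition (concentration for the kept Phase-1 cost, a Chernoff bound on $|M_b|$ followed by the per-item bound $c_{ij}\le\MMS_i$ for Phase 2, then a union bound with the trivial $n\cdot\MMS_i$ fallback in the failure event). The only cosmetic differences are that you obtain the bound $c_{i,\sigma_i(K)}\le\MMS_i/\sqrt{\log n}$ by pigeonhole rather than by the paper's rescaling of $c_i(M)$ to $K$, and you invoke Bernstein where the paper uses a multiplicative Chernoff bound on $[0,1]$-valued summands.
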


We prove Theorem~\ref{th:strategyproof-ordinal:random} by proving the following Lemmas \ref{lem:sp:approx-ratio} and \ref{lem:sp:sp}.

\begin{lemma}\label{lem:sp:approx-ratio}
	In expectation, the approximation ratio of Algorithm $\RD$ is $O(\sqrt{\log n})$.
\end{lemma}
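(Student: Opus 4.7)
The plan is to decompose agent $i$'s final cost as $c_i(x_i) = C_i + c_i(Y_i)$, where $C_i := c_i(X_i\setminus M_i)$ is the Phase 1 contribution (the small items of agent $i$ that survive the purge into $M_b$) and $c_i(Y_i)$ is the Phase 2 contribution. The strategy is to show that, except with probability $O(1/n^2)$, both pieces are $O(\sqrt{\log n})\cdot \MMS_i$ for every $i$ simultaneously, and to absorb the complementary ``bad'' event using the trivial bound $c_i(x_i)\leq c_i(M)\leq n\cdot \MMS_i$ from Lemma~\ref{lem:mms:bound}.

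The Phase 2 contribution enjoys a deterministic upper bound once $|M_b|$ is controlled. Since $c_{ie}\leq \MMS_i$ for every item $e$ by Lemma~\ref{lem:mms:bound}, we have $c_i(Y_i)\leq |Y_i|\cdot \MMS_i \leq \lceil|M_b|/n\rceil\cdot \MMS_i$. Writing $|M_b|=\sum_{e\in M}\mathbf{1}[e\in M_{\pi(e)}]$ where $\pi(e)$ is the Phase 1 random assignment, the $\pi(e)$'s are independent across $e$, so $|M_b|$ is a sum of independent Bernoullis with $\mathbf{E}[|M_b|]=\sum_e |\{j:e\in M_j\}|/n=K$, where $K=\lfloor n\sqrt{\log n}\rfloor$. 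A Chernoff bound gives $\Pr[|M_b|\geq 2K]\leq e^{-K/3}$, which is super-polynomially small; on the complementary event $c_i(Y_i)\leq (2\sqrt{\log n}+1)\cdot \MMS_i$ for every $i$ at once.

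For the Phase 1 contribution, the crucial observation is that every item $e\notin M_i$ has cost at most $c_i(M)/K\leq 2\MMS_i/\sqrt{\log n}$, since the $K$ items in $M_i$ are weakly costlier. Hence $C_i=\sum_{e\notin M_i}c_{ie}\cdot \mathbf{1}[\pi(e)=i]$ is a sum of independent random variables each bounded by $B=2\MMS_i/\sqrt{\log n}$, with $\mathbf{E}[C_i]\leq c_i(M)/n\leq \MMS_i$ and $\mathrm{Var}(C_i)\leq B\cdot \mathbf{E}[C_i]=O(\MMS_i^2/\sqrt{\log n})$. Applying Bernstein's inequality with deviation $t=2\sqrt{\log n}\cdot\MMS_i$ makes the denominator $O(\MMS_i^2)$ and the numerator $\Theta(\log n\cdot \MMS_i^2)$, yielding $\Pr[C_i\geq 3\sqrt{\log n}\,\MMS_i]\leq n^{-3}$ for $n$ sufficiently large; a union bound over the $n$ agents gives probability at most $n^{-2}$ that any $C_i$ exceeds $3\sqrt{\log n}\,\MMS_i$.

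Combining the two pieces, the good event $\mathcal{G}$ on which $|M_b|\leq 2K$ and $C_i\leq 3\sqrt{\log n}\,\MMS_i$ for all $i$ satisfies $\Pr[\mathcal{G}^c]\leq O(1/n^2)$, and on $\mathcal{G}$ we get $\max_i c_i(x_i)/\MMS_i=O(\sqrt{\log n})$. Together with the trivial $n$-bound under $\mathcal{G}^c$, this gives $\mathbf{E}[\max_i c_i(x_i)/\MMS_i]\leq O(\sqrt{\log n})+n\cdot O(1/n^2)=O(\sqrt{\log n})$. The main obstacle will be tuning the Phase 1 concentration tightly enough: the per-item cap $B=O(\MMS_i/\sqrt{\log n})$ has to be leveraged through Bernstein (plain Chernoff on counts would be too weak), and the choice $K=\Theta(n\sqrt{\log n})$ is exactly what balances the per-item size cap $\MMS_i/\sqrt{\log n}$ against the aggregate quantity $|M_b|/n\approx \sqrt{\log n}$, so that both phases contribute at the same $O(\sqrt{\log n})\cdot\MMS_i$ order.
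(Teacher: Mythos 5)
Your proposal is correct and follows essentially the same route as the paper's proof: the same two-phase decomposition, the same key observation that every item outside $M_i$ has cost at most $c_i(M)/K = O(\MMS_i/\sqrt{\log n})$, a concentration bound on the Phase-1 sum (your Bernstein argument with the per-item cap $B$ is equivalent to the paper's rescaling of costs to $[0,1]$ followed by Chernoff), the same Chernoff control of $|M_b|$ for Phase 2, and the same absorption of the failure event via the trivial $n\cdot\MMS_i$ bound. The only differences are cosmetic choices of constants and of which named inequality to invoke.
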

\begin{proof}
	We show that with probability at least $1-\frac{2}{n}$, every agent $i$ receives a collection of items of cost at most $O(\sqrt{\log n})\cdot \MMS_i$.
	Fix any agent $i$. 
	Without loss of generality, we order the items according to agent $i$'s ranking,
	i.e., $\sigma_{i}(j)=j$ for any $j\in M$ and $c_{i1}\geq \ldots \geq c_{im}$.
	
	For ease of analysis, we rescale the costs such that
	\begin{equation*}
	c_{i1}+c_{i2}+\ldots+c_{im} = n\sqrt{\log n} = K.
	\end{equation*}
	
	Note that after the scaling, agent $i$'s maximin share is $\MMS_{i}\geq \sqrt{\log n}$.
	Let $x_{ij}$ denote the random variable indicating the contribution of item $j$ to the cost of agent $i$.
	Then for $j > K$, $x_{ij} = c_{ij}$ with probability $\frac{1}{n}$, and $x_{ij} = 0$ otherwise.
	For $j \leq K$, $x_{ij} = 0$ with probability $1$.
	Note that
	\begin{equation*}
	\mathbf{E}[\sum_{i=1}^m x_i ] = \frac{1}{n}\cdot \sum_{i=K+1}^m c_{ij} \leq \frac{K}{n} = \sqrt{\log n}.
	\end{equation*}
	
	Moreover, we have $c_{ij} \leq 1$ for $j > K$, as otherwise we have the contradiction that $\sum_{j=1}^{K}c_{ij} > K$.
	Note that $\{x_{ij}\}_{j\leq m}$ are independent random variables taking value in $[0,1]$.
	Hence by Chernoff bound we have
	\begin{align*}
	& \Pr[ \sum_{j=1}^m x_{ij} \geq 7\sqrt{\log n} \cdot \MMS_{i} ] 
	\leq   \Pr[ \sum_{j=1}^m x_{ij} \geq 7\log n ] \\
	\leq &  \exp\left(-\frac{1}{3}\cdot \left(\frac{7\log n}{\mathbf{E}[\sum_{i=1}^m x_i]}-1\right) \cdot \mathbf{E}[\sum_{i=1}^m x_i] \right) < \frac{1}{n^2}.
	\end{align*}
	
	Then by union bound over the $n$ agents, we conclude that with probability at least $1-\frac{1}{n}$, every agent $i$ receives a bundle of items of cost at most $O(\sqrt{\log n})\cdot \MMS_i$ in Phase 1.
	
	Now we consider the items received by an agent in the second phase.
	Recall that the items $M_b$ will be reallocated evenly.
	By the second argument of Lemma \ref{lem:mms:bound}, to show that every agent $i$ receives a bundle of items of cost $O(\sqrt{\log n})\cdot \MMS_i$ in the second phase, it suffices to prove that $|M_b| = O(n\sqrt{\log n})$ (with probability at least $1-\frac{1}{n}$).
	
	Let $y_j\in\{0,1\}$ be the random variable indicating whether item $j$ is contained in $M_b$.
	For every item $j$, let $b_j=|\{k : j \in M_{k}\}|$ be the number of agents that label item $j$ as ``large''.
	Then we have $y_j = 1$ with probability $\frac{b_j}{n}$.
	Since every agent labels exactly $n\sqrt{\log n}$ items, we have
	\begin{equation*}
	\mathbf{E}[|M_b|] = \mathbf{E}[\sum_{i=1}^m y_i] = \frac{1}{n}\sum_{i=1}^m b_i = n\sqrt{\log n}.
	\end{equation*}
	Applying Chernoff bound we have
	\begin{align*}
	\Pr[ \sum_{i=1}^m y_i \geq 2n\sqrt{\log n}] \leq \exp\left(-\frac{n\sqrt{\log n}}{3} \right) < \frac{1}{n}.
	\end{align*}
	
	Thus, with probability at least $1-\frac{2}{n}$, every agent $i$ receives a bundle of items with cost $O(\sqrt{\log n}\cdot \MMS_i)$ in the two phases combined.
	Since in the worse case, $i$ receives a total cost of at most $n\cdot \MMS_i$, in expectation, the approximation ratio is $(1-\frac{2}{n})\cdot O(\sqrt{\log n}) + \frac{2}{n}\cdot n = O(\sqrt{\log n})$.
\end{proof}

\begin{lemma}\label{lem:sp:sp}
	$\RD$ is strategyproof in expectation.
\end{lemma}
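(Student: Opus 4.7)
The plan is to fix an arbitrary agent $i$ and show that her expected total cost, as a function of her reported ranking $\sigma_i$, is minimized when she reports truthfully. The key observation is that $\sigma_i$ enters the algorithm only through the set $M_i\subseteq M$ of the top $K$ items agent $i$ declares; once $M_i$ is fixed, the rest of the algorithm proceeds identically. So it suffices to show that the $K$-subset $M_i$ minimizing agent $i$'s expected cost is exactly the set of items with largest true cost $c_{ij}$.

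For each item $j$, let $b_j^{-i}=|\{k\neq i:j\in M_k\}|$; this quantity depends only on the reports of agents other than $i$, so it is unaffected by $i$'s strategy. I will compute the expected contribution of item $j$ to agent $i$'s cost in two cases. Case (a): $j\in M_i$. If $j$ lands on agent $i$ in Phase 1 (probability $1/n$), it is moved to $M_b$, so $i$ pays nothing for $j$ in Phase 1. In Phase 2, $j\in M_b$ iff the agent $j$ lands on is one of the $b_j^{-i}+1$ agents who declared $j$ large, which happens with probability $(b_j^{-i}+1)/n$; conditional on $j\in M_b$, agent $i$ receives $j$ with probability $1/n$ by symmetry of the uniform random partition of $M_b$ into $n$ equal bundles. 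Hence the expected contribution is $c_{ij}(b_j^{-i}+1)/n^2$. Case (b): $j\notin M_i$. If $j$ lands on $i$ in Phase 1 (probability $1/n$), it stays, contributing $c_{ij}/n$. Otherwise $j$ enters $M_b$ only if it lands on one of the $b_j^{-i}$ other agents that declared it large, contributing an additional $c_{ij}b_j^{-i}/n^2$ from Phase 2. Thus the expected contribution is $c_{ij}(n+b_j^{-i})/n^2$.

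Comparing the two cases, declaring item $j$ large rather than small decreases agent $i$'s expected cost from $j$ by $c_{ij}(n-1)/n^2 \ge 0$, with the decrement being larger for items of larger true cost $c_{ij}$. Since the algorithm forces $|M_i|=K$, agent $i$ minimizes her expected cost by choosing $M_i$ to consist of the $K$ items of largest true cost---which is exactly the top-$K$ prefix of her true ranking, and therefore exactly what truthful reporting of $\sigma_i$ yields. Summing over $j$ by linearity of expectation and reporting the truthful $\sigma_i$ gives the strategyproofness-in-expectation inequality.

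The main obstacle I anticipate is the careful Phase-2 analysis: one must argue, via symmetry of the uniform equal-sized random partition of $M_b$, that every item in $M_b$ is reassigned to any particular agent with probability exactly $1/n$---including the slightly counterintuitive possibility that an item rejected by agent $i$ is reassigned back to her. Once this symmetry is in hand, the probability that $j$ ends up with $i$ factors cleanly into a product of $\Pr[j\in M_b]$ (which involves $b_j^{-i}$ and an additive indicator of whether $j\in M_i$, but is otherwise independent of $M_i$) and $1/n$, giving the clean comparison between the two cases above.
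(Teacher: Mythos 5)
Your proof is correct and follows essentially the same route as the paper's: both reduce the report to the choice of the $K$-subset $M_i$, use the $1/n$ symmetry of the Phase-2 reassignment, and conclude that the expected cost equals a report-independent term minus $\frac{n-1}{n^2}\sum_{j\in M_i}c_{ij}$, so truth-telling (maximizing that sum) is optimal. The only difference is presentational: you decompose per item into the two cases $j\in M_i$ and $j\notin M_i$, whereas the paper aggregates by phase and compares against the baseline where agent $i$ labels nothing.
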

\begin{proof}
	To prove that the algorithm is strategyproof in expectation, it suffices to show that for every agent, the expected cost she is assigned is minimized when being truthful.
	Let $K=n\sqrt{\log n}$ and fix any agent $i$.
	Suppose $c_{i1},\ldots,c_{iK}$ are the costs of items labelled ``large'' by the agent; and $c_{i,K+1},\ldots,c_{im}$ are the remaining items.
	Then the expected cost assigned to the agent in the first phase is given by $\frac{1}{n}\sum_{j=K+1}^m c_{ij}$, as every item is assigned to her with probability $\frac{1}{n}$.
	Next we consider the cost incurred to agents in the second phase.
	
	Recall that the expected total cost of items to be reallocated in the second phase is $\mathbf{E}[\sum_{j\in M_b} c_{ij}] = \sum_{j=1}^m c_{ij}\cdot \frac{b_j}{n}$, where $b_j$ is the number of agents that label item $j$ ``large''.
	Let $\mathcal{E}$ be this expectation when agent $i$ does not label any item ``large''.
	By labelling $c_{i1},\ldots, c_{iK}$ ``large'', agent $i$ increases the probability of each item $j\leq K$ being included in $M_b$ by $\frac{1}{n}$.
	Thus it contributes an $\frac{1}{n}\sum_{j=1}^K c_{ij}$ increase to the expectation of total cost of $M_b$.
	In other words,
	\begin{equation*}
	\mathbf{E}[\sum_{j\in M_b} c_{ij}] = \mathcal{E} + \frac{1}{n}\sum_{j=1}^K c_{ij}.
	\end{equation*}
	
	Since a random subset of $\frac{|M_b|}{n}$ items from $M_b$ will be assigned to agent~$i$, the expected total cost of items assigned to her in the two phases is given by
	\begin{equation*}
	\frac{1}{n}\sum_{j=K+1}^m c_{ij} + \frac{1}{n}\cdot \left( \mathcal{E} + \frac{1}{n}\sum_{j=1}^K c_{ij} \right).
	\end{equation*}
	
	Obviously, the expression is minimized when $c_{i1}+\ldots+c_{iK}$ is maximized.
	Hence every agent minimizes her expected cost by telling the true ranking.
\end{proof}

\section{Related Works}\label{sec:review}

The study of computing fair allocations of resources has a long history.
Arguably, two of the most widely studied solution concepts are envy-freeness (EF) and proportionality, whose existence is guaranteed when there is a single divisible item, i.e., the {\em cake cutting problem} \citep{BrTa96a,journals/combinatorics/Stromquist08,focs/AzizM16}.  
The problem becomes tricky when the items are indivisible, because exact envy-free or proportional allocations barely exist and are hard to approximate. 
In order to characterize the extent to which fairness can be guaranteed in the indivisible setting, several relaxations have been proposed, such as envy-free up to one item (EF1) \citep{conf/sigecom/LiptonMMS04}, envy-free up to any item (EFx) \citep{journals/teco/CaragiannisKMPS19}, and maximin share fair (MMS) \citep{bqgt/Budish10}, whose relations have been discussed by \citep{conf/ijcai/AmanatidisBM18}. 
Among these relaxations, MMS is undoubtedly one of the most widely studied one. 
It has been conjectured that an MMS allocation always exists until \citep{sigecom/ProcacciaW14,jacm/KurokawaPW18} identified a counter-example.
Thereafter, there appeared rich works designing approximate MMS allocations. 
The first $O(1)$ approximation algorithm was given by \citep{sigecom/ProcacciaW14}, whose approximation ratio is 3/2 but its running time can be exponential in the number of agents.
Later, \citep{icalp/AmanatidisMNS15,journals/talg/AmanatidisMNS17} refined the algorithm in \citep{sigecom/ProcacciaW14} and guaranteed the same approximation with a polynomial running time.
The same approximation is also obtained in \citep{conf/soda/GargMT19,conf/sigecom/BarmanM17,journals/teco/BarmanK20}.
\citep{ec/GhodsiHSSY18} improved these results by giving a 4/3 approximation algorithm whose running time may be exponential. 
More recently, \citep{sigecom/GargT20} designed a polynomial time algorithm to find a 4/3 approximate MMS allocation and proved the existence of $(4/3- \Theta(1/n))$-MMS allocation, breaking the barrier of 4/3.

Although most of the work on MMS allocation of items is for the case of goods, recently, fair allocation of chores~\citep{aaai/AzizRSW17} or combinations of goods and chores~\citep{ijcai/AzizCIW19} have received much attention.
\citep{aaai/AzizRSW17} proved that MMS allocations do not always exist but can be easily 2-approximated. 
Later, \citep{ec/BarmanM17} presented a $4/3$-approximation algorithm for MMS allocation of chores, and \citep{corr/HuangL19} further improved this ratio to $11/9$.
\citep{ijcai/AzizCL19} extended the definition of MMS to the weighted version that deals with asymmetric agents.

\paragraph{Distortion.}
Our work is also inspired by the growing literature on distortion in voting, 
where voters express ordinal preferences (instead of numerical utilities) over candidates \citep{conf/cia/ProcacciaR06,journals/ai/BoutilierCHLPS15,journals/jair/CaragiannisNPS17,conf/sigecom/MandalSW20}; and matching, where only the edge ranking is known instead of the exact weights \citep{DBLP:conf/wine/AnshelevichS16,DBLP:journals/sigecom/Anshelevich16,DBLP:conf/aaai/AnshelevichS16}.
The goal is to use the partial information to find solutions that  maximizes social welfare, and {\em distortion} is the measure to evaluate the worst-case multiplicative loss in social welfare due to this lack of information.
A major focus of our work is identifying what approximation guarantees of fairness can be achieved by only using ordinal information, which is naturally connected to the work on distortion.
There has been a substantial amount of work on using ordinal preferences in fair allocation of indivisible goods.
For example, \citep{ai/AzizGMW15} considered the question of checking the existence of allocations that possibly or necessarily satisfy certain fairness guarantees such as envy-freeness given only ordinal preferences of the agents over the goods.
\citep{ecai/BouveretEL10} studied similar questions, but given partial ordinal preferences of the agents over bundles of goods.
More closely related to ours are the papers that use ordinal allocation rules (such as picking sequence rules) in
settings with cardinal valuations. For example, \citep{conf/ecai/AzizKWX16} focused on the complexity of checking what social welfare such rules can possibly or necessarily achieve. \citep{ijcai/AmanatidisBM16} sought to use picking sequence rules to obtain approximation of the MMS fairness. 
Very recently, \citep{halperndistortion} showed that there is an algorithm using ordinal preferences to guarantee $O(\log n)$-approximate MMS fairness when items are goods.

\paragraph{Mechanism Design without Money.}
Strategyproofness is a challenging property to satisfy for fair division algorithms. For cake cutting problem, \citep{journals/geb/ChenLPP13} and \citep{conf/ijcai/BeiCHTW17} studied to what extent there exist strategyproof algorithms to fairly allocate the cake for piece-wise uniform or linear valuations.  
\citep{conf/wine/MayaN12} provided a characterization of strategyproof algorithms for the case of two agents.
When items are indivisible, \citep{conf/aldt/CaragiannisKKK09} and \citep{conf/sigecom/LiptonMMS04} have discussed how to elicit true information from the agents while ensuring some degree of envy-freeness.
More recently, \citep{ijcai/AmanatidisBM16} initiated the work on strategyproof allocation of goods with respect to MMS fairness. 
One important algorithm class is sequential picking, which is a generalization of round-robin.
\citep{KoCh71a,conf/ecai/BouveretL14,conf/aaai/AzizBLM17,conf/aldt/0001GW17} studied strategic aspects of sequential picking.
There is also work on the approximation of welfare that can be achieved by strategyproof algorithms for allocation of \emph{divisible} items~(e.g., \citep{atal/AzizFCMM16,DBLP:conf/sigecom/ColeGG13}).
All the above research focuses on the case of goods. In this paper, we aim at addressing the incentive compatibility for allocating indivisible chores.

\section{Discussion and Conclusion}
\label{sec:conclusion}

\paragraph{$\SRR$ Is Not Optimal for Larger $n$.}
As we have proved in Section \ref{sec:algorithmic}, our algorithm $\SRR$ achieves optimal approximation ratios for $n=2$ and $n=3$.
However, it fails to return an optimal solution when $n = 4$.
Actually, following similar analysis for $n=2$ and $n=3$, one can show that the approximation ratio of our algorithm is $1.5$ for $n=4$.
However, we are aware of an algorithm that performs strictly better than $1.499$-approximate.
Furthermore, we are aware of an instance with $n=4$, for which no ordinal algorithm performs better than $1.405$-approximate.
To this end, we conjecture that the optimal approximation ratio $r^*(n)$ (with $n$ agents) is an increasing function of $n$.
In this paper we have shown that
\begin{equation*}
r^*(2) = \frac{4}{3}\approx 1.333,\quad r^*(3) = \frac{7}{5}=1.4,\text{ and}  \quad \forall n,\ r^*(n)\leq \frac{5}{3}\approx 1.667 .
\end{equation*} 
We can also show that $
1.405 < r^*(4) < 1.499$ \footnote{Since we are not able to obtain the exact ratio, we did not include the analysis here.}.
We leave it as a future work to analyze the optimal ratio $r^*(n)$ for $n\geq 4$.

\paragraph{Constant Approximations for Our Strategyproof Algorithm.}
We have shown in Section~\ref{ssec:det} a deterministic strategyproof algorithm that is $O(\log(m/n))$-approximate MMS.
However, in many applications it is desirable to obtain constant approximation ratios.
While our algorithm has constant approximation ratios when $m = O(n)$, it is not clear how large the constant is.
In particular, if we need to guarantee an approximation ratio $r$, what is the maximum number of items we can handle?
In this part we give a detailed analysis to answer this question.
Following the analysis of Section~\ref{ssec:det}, in order to guarantee an approximation ratio of $r$, we can set $a_1 = r$, and for each $i = 2,\ldots,n$, we set $a_i = r\cdot \left\lceil \frac{a_1 + \ldots + a_{i-1}}{n} \right\rceil$.
To guarantee that all items are allocated, we have $m\leq \sum_{i=1}^n a_i$.
For example, if $r = 2$, we have
\begin{align*}
	a_1  =\ldots = a_{\frac{n}{2}} = 2, \qquad
	& a_{\frac{n}{2}+1} = \ldots = a_{\frac{3n}{4}} = 4, \\
	a_{\frac{3n}{4}+1} = \ldots = a_{\frac{11n}{12}} = 6, \qquad
	& a_{\frac{11n}{12}+1} = \ldots = a_n = 8.
\end{align*}
Hence we have $m \leq \sum_{i=1}^n a_i = \frac{11}{3}n\approx 3.67 n$.
Similarly, to guarantee an approximation of $r = 3$, we can let the first $\frac{n}{3}$ values of $a_i$ be $3$; the next $\frac{n}{6}$ values of $a_i$ be $6$; then the next $\frac{n}{9}$ values of $a_i$ be $9$, etc.
Following similar calculations, one can verify that the maximum number of items the algorithm can handle to guarantee $r = 3$ is $m \approx 10.26 n$; for $r = 4$, we have $m \approx 30.15 n$.

\paragraph{Conclusion} 
In this paper, we initiated the study of approximate and strategyproof maximin fair algorithms for chore allocation using ordinal preferences.
Our study leads to several new questions.
Two most obvious research questions are to find the optimal ordinal algorithm for arbitrary number of agents,
and to improve the approximation or study the lower bounds of strategyproof (randomized) algorithms.
At present, we have two parallel lines of research for goods and chores.
It is important to consider similar questions for combinations of goods and chores~\citep{ijcai/AzizCIW19}.
Finally, it is interesting to extend our work to the case of asymmetric agents~\citep{ijcai/AzizCL19}, 
where agents possess different weights and a fair allocation should respect these weights. 

%

\bibliographystyle{plainnat}
\bibliography{sp-chores}

\end{document}